\crefname{hypothesis}{Hypothesis}{Hypotheses}
\title{Feedback stabilization via a quantum projection filter\thanks{Submitted to the editors DATE.
\funding{This work is supported by the ANR project  Q-COAST Projet- ANR-19-CE48-0003 and the ANR project IGNITION ANR-21-CE47-0015.}}}
\author{Nina H. Amini\thanks{Laboratoire des signaux et syst\`{e}mes (L2S), CNRS-CentraleSup\'{e}lec-Universit\'{e} Paris-Sud, Universit\'{e} Paris-Saclay, 3, rue Joliot Curie, 91190 Gif-sur-Yvette, France
\email{ (nina.amini@l2s.centralesupelec.fr}).}\and 
Paolo Mason\thanks{Laboratoire des signaux et syst\`{e}mes (L2S), CNRS-CentraleSup\'{e}lec-Universit\'{e} Paris-Sud, Universit\'{e} Paris-Saclay, 3, rue Joliot Curie, 91190 Gif-sur-Yvette, France \email{ (paolo.mason@l2s.centralesupelec.fr}).}
\and Ibrahim Ramadan\thanks{Laboratoire des signaux et syst\`{e}mes (L2S), CNRS-CentraleSup\'{e}lec-Universit\'{e} Paris-Sud, Universit\'{e} Paris-Saclay, 3, rue Joliot Curie, 91190 Gif-sur-Yvette, France \email{ (ibrahim.ramadan@l2s.centralesupelec.fr}).}
  %(\email{ibrahim.ramadan@l2s.centralesupelec.fr}).}
}
\begin{document}

\maketitle
\begin{abstract}
%In this paper, we examine the case of an incorrectly initialized open quantum system that undergoes imperfect and indirect measurement. This scenario involves introducing an additional state to represent the estimated quantum state. To reduce the computational burden of calculating the quantum filter in real-time, we propose an exponential quantum projection filtering method. This approach involves projecting the time evolution trajectory of the quantum system state onto a lower-dimensional submanifold embedded within a higher-dimensional space. To assess the accuracy of this approximate quantum filter, we conduct an error analysis. Subsequently, we investigate N-level quantum angular momentum systems coupled with the projection filter to analyze the long-term behavior of these systems when a feedback controller is present. We establish sufficient conditions for the feedback controller that ensure exponential stabilization of the coupled stochastic system toward an eigenstate of the measurement operator. After that, we provide a simplified filter to guarantee the strength of our dimension reduction technique in attaining the feedback stabilization.  To demonstrate the effectiveness of the projection filtering scheme, we provide simulation results based on a specific example involving four-level coupled quantum angular momentum systems.
This paper considers a simplified model of open quantum systems undergoing imperfect measurements obtained via a projection filter approach. We use this { approximate filter} in the feedback stabilization problem specifically in the case of Quantum Non-Demolition (QND) measurements. %, while taking into account measurement imperfections and inaccuracies in the initial state specification.  
The feedback design relies on the structure of the exponential family utilized for the projection process. We demonstrate that the introduced feedback guarantees exponential convergence of the original filter equation toward a predefined target state, corresponding to an eigenstate of the measurement operator.
\end{abstract}

\begin{keywords}
 stochastic stability, quantum projection filter, quantum feedback, Lyapunov techniques
\end{keywords}

% REQUIRED
\begin{AMS}
  81Q93, 93D15, 60H10
\end{AMS}

%%%%%%%%%%%%%%%%%%%%%%%%%%%%%%%%%%%%%%%%%%%%%%%%%%%%%%%%%%%%%%%%%%%%%%%%%%%%%%%%

\section{Introduction}
%The reliable control of 
{ Controlling quantum dynamics in a reliable way} represents a crucial milestone in the advancement of quantum technologies. Quantum systems can be categorized as closed or open. Unlike closed quantum systems, which are supposed to remain isolated from the environment, open quantum systems interact with their surroundings, representing a more realistic scenario for physical systems. However, the interaction with the environment leads to decoherence phenomena, causing information loss, see e.g.,  \cite{davies1976quantum}. The dynamics of open quantum systems are captured by quantum Langevin equations, which can be derived by using quantum stochastic calculus \cite{Hudson1984QuantumIF} 
and the input-output formalism \cite{gardiner2004input}.

%To either suppress decoherence or utilize it efficiently to achieve specific objectives, 

 %Open quantum systems can be indirectly measured. 
 Quantum measurements exhibit a probabilistic nature, introducing random back-action on the system, a property that lacks a classical analog. The conditional evolution of the quantum system state is described by a stochastic master equation, which is derived from quantum filtering theory, originally developed by Belavkin \cite{belavkin1992quantum,belavkin1989nondemolition}, see also \cite{bouten2007introduction} for a recent reformulation. In the physics and probability community, the term quantum trajectory is more commonly used to indicate the stochastic evolution of the filter state, see e.g., \cite{barchielli2009quantum,wiseman2009quantum,Carmichael1993AnOS}. %Roughly speaking, quantum filters provide an estimation of quantum system's state  which is applied in the design of a measurement-based feedback control. 

The control of open quantum systems represents a fundamental area of study to either suppress decoherence or exploit it efficiently to achieve specific objectives \cite{belavkin2004towards}. Because of the need for increased robustness, closed-loop control strategies %take precedence over 
may be preferable, compared to
open-loop control strategies. Feedback strategies can be implemented whether or not measurements are present.
%both in the presence and absence of measurements. 
In the presence of measurements, feedback is based on partial information obtained from the measurement outcomes. This approach is commonly known as measurement-based feedback control.

In practical experiments, numerous factors contribute to imperfections, including uncertain initial states and detector inefficiency, see e.g., \cite{sayrin2011real}. As a result, ensuring the robustness of control strategies becomes a crucial concern. This issue is addressed in various papers, see e.g., \cite{liang2020robustness,lidar1998decoherence,ticozzi2008quantum}. Another issue is represented by the time-delays that may appear in the real-time implementation of  measurement-based feedbacks and may impact their effectiveness.
This is  due to the relatively slow processing time of {classical measurements by digital electronics} and the high dimensions of quantum filters. %This leads to time delays that render the feedback law inadequate for the system's time-scale. 
In \cite{Cardona2019ExponentialSO}, the authors established a feedback stabilization result for systems undergoing QND measurements and proposed  an approximate filter which can be used for the feedback design. The efficiency of the latter was shown through numerical studies. In \cite{liang2022model}, the authors gave a formal proof of the effectiveness of such a feedback design.

%To avoid the need for precise adjustment of noise parameters that characterize decoherence processes and to eliminate dependence on initial conditions, researchers have extensively investigated model robustness in various papers, such as those mentioned in \cite{liang2020robustness}, \cite{lidar1998decoherence}, \cite{ticozzi2008quantum}. Implementing measurement-based feedback is challenging due to the relatively slow processing time of digital classical information by quantum filters. This time delay renders the feedback law inadequate for the system's time-scale. 

To reduce the representation complexity of quantum filters, a projection filter methodology has been introduced for quantum systems in \cite{Vanhandel1}. This generalizes the projection filter methodology considered for classical filter equations. The idea is to use differential and information geometry tools, as stated in~\cite{Amari,Brigo1,Brigo2}.
Subsequently, { unsupervised learning techniques, specifically local tangent space alignment, were employed in~\cite{Nielsen} in order} to recast the system state's evolution within a lower-dimensional manifold. The work presented in \cite{tezak2017low} derived a dynamical law by minimizing the statistical distance in the moving basis, establishing its equivalence to the projection filter approach. A quantum projection filtering approach was developed in \cite{Gao}, projecting the system's dynamics onto the tangent space of  an exponential family of unnormalized density matrices. 
% Complementary studies have explored the application of an extended Kalman filter \cite{Emzir}. 
Recently in \cite{ramadan2022exact}, we obtained an exact solution of quantum filters for QND measurement expressing the quantum trajectory
 in terms of the solution of a lower dimensional stochastic differential equation. Also we performed an error analysis for  a quantum projection filter based on an exponential family in the case of imperfect QND measurement; this extends the result
of \cite{Gao} which treats the perfect QND measurement. 
%numerical approaches \cite{Rouchon} within this context.
 %The projection filter has been intrutilizing differential and information geometry tools, as documented in~\cite{Amari,Brigo1,Brigo2}.% In the realm of quantum systems, the quantum projection filter scheme was initially proposed in \cite{Vanhandel1}. 

The projection filter approach may be particularly interesting in order to design feedback controls  implementable in real-time, see e.g., \cite{Rouchon} where numerical studies have been realized concerning the use of an approximate filter in feedback design.   Numerous studies focused on the control of open quantum systems to achieve the preparation of pure states. These investigations are crucial for advancing quantum technologies. The pioneering work \cite{Vanhandel2} addressed the problem of designing a quantum feedback controller that globally stabilizes a quantum spin-$\frac{1}{2}$ system toward an eigenstate of the measurement operator $\sigma_z$, even when imperfect measurements are involved. The controller was developed using numerical methods to find an appropriate global Lyapunov function. Subsequently, in \cite{Vanhandel3}, the authors employed stochastic Lyapunov techniques to analyze the stochastic flow and constructed a switching control law to stabilize $N$-level quantum angular momentum system toward an eigenstate of the measurement operator.  The recent works \cite{liang2019exponential,Liang} combined local stochastic stability analysis with the support theorem to establish exponential stabilization results for spin-$\frac{1}{2}$ and spin-$J$ systems toward stationary states of the open-loop dynamics. Unlike previous approaches based on the LaSalle method, their techniques allowed  to estimate the convergence rate to the target state. This estimation is crucial for practical implementation in quantum information processing. In \cite{Cardona2018ExponentialSS,Cardona2019ExponentialSO}, the authors present an exponential stabilization using a noise-assisted feedback. We also refer to \cite{amini2023exponential}, where an exponential feedback stabilization in expectation toward invariant subspaces of the evolution for generic measurement has been developed.

In this paper, we apply a projection filter approach  for open quantum systems undergoing indirect measurement, in presence of detection imperfections and unknown initial state. To define a projection filter, we make use, as in \cite{Gao,ramadan2022exact}, of a parametrized  exponential family. We then tackle the feedback stabilization problem for a system describing the evolution of a $N-$level quantum angular momentum system undergoing imperfect QND measurements, assuming that the feedback depends on the projection filter.  In this case, we deal with a coupled stochastic equation consisting of the quantum filter and its corresponding projection filter with an arbitrary fixed initial state. To tackle the problem of feedback stabilization in this case, we first show that the analysis done in \cite{liang2021robust} can be adapted to such framework. Then, in view of reducing the computational complexity of the real-time implementation of the feedback, we introduce a new lower-dimensional parametrization of the exponential family, so that the original problem can be reformulated as a stabilization problem for a coupled system describing the evolution of a pair formed by the actual filter and a vector of parameters. We then seek for a feedback as a function of this new parametrization. We provide sufficient conditions on the feedback controller ensuring the exponential stabilization of the coupled system toward a chosen target state. Furthermore, we propose some examples of feedback which satisfy such conditions. 

%This lower-dimensional submanifold is embedded within the higher-dimensional space, and our objective is to ensure that the solutions to the optimal filter remain within this finite-dimensional submanifold. Additionally, we extend the computation of the approximation error bounds \cite{ramadan2022exact} to encompass all cases, not just quantum non-demolition (QND) measurements. We observe that under QND measurements, the approximate projection filter's asymptotic behavior aligns with the original filter, as both dynamics converge toward the set of invariant subspaces. Consequently, we investigate N-level quantum angular momentum systems coupled with the projection filter to analyze their long-term behavior when a feedback controller is implemented. We establish sufficient conditions for the feedback controller and the estimated parameters, ensuring exponential stabilization of the quantum state and its estimation toward a predetermined eigenstate of the measurement operator $J_z$. After that, we provide a
%complete proof for the use of a simplified filter in feedback stabilization, following by some simplified control laws examples. This approach reduces the complexity of computing the filter state's evolution from $N^{2}-1$ to $m$, facilitating the practical use of an approximate filter in feedback stabilization.

This paper is structured as follows. In Section~\ref{A}, we present the quantum filter equation and its estimation
assuming that the initial state is unknown.
%in absence of knowledge about its initial state. 
In Section~\ref{B}, we describe the considered projection filter approach.
%in contrast to previous studies (see e.g., \cite{Gao,ramadan2022exact}) which only treat the QND case. 
In Section~\ref{C}, we consider $N$-level quantum angular momentum systems and the feedback stabilization problem based on the projection filter. We give sufficient conditions on the feedback controller stabilizing the filter equation toward a chosen eigenstate of the measurement operator. Section~\ref{D} focuses on the feedback stabilization problem based on a parametrization of the exponential family used for the projection filter.   
Numerical simulations are provided in Section~\ref{E} for four-level quantum angular momentum systems. Section~\ref{F} concludes this work and gives further research lines.
% In Section \ref{C}, we establish sufficient conditions for the feedback controller and estimated parameters to achieve exponential stabilization of the quantum state and its projection toward a predetermined eigenstate of the measurement operator $J_z$. Subsequently, in section \ref{D}, we give a simplified filter that proves the effect of our model reduction in the feedback stabilization, and then we design a parameterized family of feedback controllers that satisfy these conditions. To illustrate and validate our findings, we conduct a numerical simulation in Section \ref{E}, specifically focusing on four-level coupled quantum angular momentum systems. Finally, Section \ref{F} offers a summary of our research and outlines potential future directions.

{\bf{Notation.}} 
The complex conjugate transpose of a matrix $A$ is denoted by $A^\dag$. The commutator of two square matrices $A$ and $B$ is represented as $[A, B]=AB-BA$. Assuming that $X$ is a metric space, we indicate by $B_\varepsilon(x)$ the ball of radius $\varepsilon$ around $x\in X$.

\section{Problem Description}\label{A}
Given a $N$-dimensional quantum system, its state can be described by the density operator $\rho$, which is 
%a complex matrix belonging to the following set
a { positive semi-definite} Hermitian matrix of trace one, i.e., $\rho$ belongs to the set
\begin{equation*}\mathcal{S}_N:=\left\{\rho \in \mathbb C^{N \times N}\mid \rho=\rho^{\dagger}, \rho \geq 0, \operatorname{Tr}(\rho)=1\right\}.\end{equation*}
The state evolution of the quantum system undergoing continuous-time measurement can be described by the following quantum {stochastic master equation}  (see e.g., \cite{belavkin1989nondemolition})
\begin{equation}
%\left\
\begin{array}{l}
d \rho_t=-i\left[H, \rho_t\right] d t+\mathcal{F}( \rho_t )d t+\mathcal{G}( \rho_t) d W_t \\
d Y_t=d W_t+\sqrt{\eta}\operatorname{Tr}\left(\rho_t(L+L^{\dagger})\right) d t,
\end{array}%\right.
 \label{belavkin}
\end{equation}
%If an observable quantity $L$ is measured, such as in a homodyne measurement, the behavior of the system can be described by a stochastic master equation (SME) \cite{hamerly}:
where $L$ corresponds to the measurement operator, $Y$ represents the observation process in the case of homodyne detection, and the super-operators $\mathcal{F}$ and $\mathcal{G}$ have the following form
\begin{equation*}
\begin{array}{l}
\mathcal{F}( \rho):=L \rho L^{\dagger}-\frac{1}{2}\left(L^{\dagger} L \rho+\rho L^{\dagger} L\right) \\
\mathcal{G}( \rho):=\sqrt{\eta}\left(L \rho+\rho L^{\dagger}-\operatorname{Tr}\left[(L+L^{\dagger}) \rho\right] \rho\right).
\end{array}
\end{equation*} 
The operator $H$ represents the Hamiltonian of the system, which is formed by a free Hamiltonian and a controlled term, %can contain a free Hamiltonian $H_0$ and a controlled Hamiltonian $H_1,$
i.e., $H=H_0+ u H_1$ where $u$ %$=u(\tilde \rho_t)$%
is the control input. Note that, due to the dependence on the control input, the Hamiltonian is in general time-varying. We denote it simply by $H$ for the sake of simplicity.  
In the following the control input will be chosen in the form of a state-feedback, i.e., 
$u=u(\rho).$
%Here we consider a state-based feedback where $u=u(\rho).$ %as a function of the estimated state%,
In the equation~\eqref{belavkin} { $W_t$} is a classical Wiener process with respect to  { a probability space $(\Omega,\mathcal F, \mathbb P$)}. The efficiency of the detector is expressed by the parameter $\eta \in(0,1]$. 

In practice, we do not have { the quantum state at our disposal}. In this case an estimation of the above quantum filter is considered. A natural choice is to consider an estimate denoted by $\hat\rho_t$ following the same dynamics as above with an arbitrary initial state $\hat\rho_0,$ 
%{\textcolor{blue}{The following equation is organised as a Luenberger observer with the innovations $d Y_t - \sqrt{\eta}\operatorname{Tr}\left(\hat \rho_t(L+L^{\dagger})\right) d t$}},%Therefore, it is necessary to dynamically reconstruct this quantum state using a quantum filter, commonly known as a quantum state-observer \cite{bouten2007introduction}. Assume that we do not have access to the initial state $\rho_0$. %and the actual physical parameters.
\begin{align}
%\begin{array}{l}
 \label{estimate}
d \hat \rho_t & =-i\left[H, \hat \rho_t\right] d t+\mathcal{F}( \hat \rho_t )d t+\mathcal{G}( \hat \rho_t)  (d Y_t-\sqrt{\eta}\operatorname{Tr}\left(\hat \rho_t(L+L^{\dagger})\right) d t)\\
%& = \left(-i\left[H, \hat \rho_t\right] +\mathcal{F}( \hat \rho_t )-\sqrt{\eta}\mathcal{G}( \hat \rho_t)\operatorname{Tr}\left(\hat \rho_t(L+L^{\dagger})\right)\right) d t+\mathcal{G}( \hat \rho_t)  d Y_t.\nonumber\\
& = \left(-i\left[H, \hat \rho_t\right] +\mathcal{F}( \hat \rho_t )+\sqrt{\eta}\mathcal{G}(\hat \rho_t)\operatorname{Tr}\left((\rho_t-\hat \rho_t)(L+L^{\dagger})\right)\right) d t+\mathcal{G}( \hat \rho_t)  d W_t.\nonumber
%\end{array}
\end{align}
The stochastic master equation~\eqref{belavkin} (coupled or not with the equation~\eqref{estimate}) has been extensively explored in various fields such as quantum state estimation and quantum feedback control \cite{Vanhandel2,van2005modelling,liang2021robust}. 
The evolution of such an It\^o stochastic differential equation takes place in a $N^2-1$ dimensional space. %, and is equivalent to a large number of It\^{o} stochastic differential equations, specifically $N^2-1$ equations. 
For large values of $N$ the time required to solve the equation may become %prohibitively 
extremely high. This represents an obstacle to the use of a feedback control for stabilization based on a real time evaluation of $\hat \rho_t$. 
%However, when dealing with a large value of $N$, the computational complexity and time required to solve the equations can become prohibitively high. 
Therefore, it is common practice to use an %optimal 
estimate of the system's state based on a lower dimension model of the dynamics, instead of directly computing the solution of~\eqref{estimate}. This motivates us to use a model reduction method called the {quantum projection filter} approach.
%, based on the measured signal, as an approximation of the true state, even if there is a delay in the feedback loop. This motivates us to use a dimensional model reduction: Quantum Projection Filter.}
\section{Quantum projection filter}\label{B}
The concept of quantum projection filtering involves constructing a low-dimensional submanifold of the Hilbert space and a corresponding geometric projection allowing to map the %quantum trajectory 
dynamics
onto the submanifold. This approach can be used to obtain an estimate of the quantum state~\cite{Vanhandel1}.
%approximate the quantum state information \cite{Nielsen}. 
The so-called exponential quantum projection filter represents an example of application of this idea~\cite{Gao,ramadan2022exact}. In this case, one approximates an unnormalized version of the system state, i.e., a matrix $\check \rho_t$ satisfying ${\rho}_t = \check \rho_t/{\rm Tr}(\check{\rho}_t)$, which is the solution of the following Zakai equation, 
\begin{equation}
d \check{\rho}_{t}=\left(- i\left[H, \check{\rho}_{t}\right]+ %L \check{\rho}_{t} L^{\dagger}-\frac{1}{2}\left(L^{\dagger} L \check{\rho}_{t}+\check{\rho}_{t} L^{\dagger} L\right)
\mathcal{F}(\check\rho_t)\right) d t+\sqrt{\eta}\left(L \check{\rho}_{t}+\check{\rho}_{t} L^{\dagger}\right) d Y_{t}. \label{2}
\end{equation}
In the following, we also consider the corresponding Stratonovich form 
\begin{equation}
d {\check{\rho}}_{t}=\left(-i\left[H, {\check{\rho}}_{t}\right]+\hat{\mathcal{F}}\left({\check{\rho}}_{t}\right)\right) d t+\sqrt{\eta}\left(L {\check{\rho}}_{t}+{\check{\rho}}_{t} L^{\dagger}\right) \circ d Y_{t}, \label{stra}
\end{equation} 
where 
$\hat{\mathcal{F}} \left(\rho\right)=(1-\eta)L \rho L^{\dagger}-\frac{\left(\eta L+L^{\dagger}\right) L \rho+\rho L^{\dagger}\left(L+\eta L^{\dagger}\right)}{2}.$ Compared to the corresponding It\^o form~\eqref{2}, the equation~\eqref{stra} is more compatible with the manifold structure of the state space (see, e.g, \cite{elworthy_1982}).
%\cite{Brigo1} {[general reference to Ito vs Stratonovic instead of Brigo?]}{\color{red}[\cite{elworthy_1982}]}). 

%the system state at time t is denoted by $\rho_t$ and its unnormalized quantum density matrix is denoted by $\check{\rho}_t = \hat \rho_t Tr(\check{\rho}_t)$. 
%In order to perform a geometric projection, it is necessary to treat the random increments in the equations of motion as vectors. Consequently, the projected equations of motion will be of the Stratonovich type. Then the exponential quantum projection filter can be applied by converting the equation to a unnormalized Stratonovich equation as follows:
The aim is to obtain an approximate solution $\check{\rho}_{\theta_t}$ of the above  equation. A natural choice is to consider the following parametrized exponential family (see e.g., \cite{Gao})
\begin{equation}
\check{\mathcal{M}}=\left\{\check{\rho}_{\theta}:=e^{\frac{1}{2} \sum_{j=0}^{m-1} \theta_j A_j} \bar\rho_0 e^{\frac{1}{2} \sum_{j=0}^{m-1} \theta_j A_j}\mid \theta\in \mathbb{R}^m\right\},
\label{manifold}
\end{equation}
where $\bar\rho_0$ represents the initial condition of the approximated dynamics and can be chosen arbitrarily. The operators $A_j$ for $j=0,1, \ldots, m-1$ are assumed to be {Hermitian $N \times N$ matrices}.
Locally, $\check{\mathcal{M}}$ is a $m$-dimensional submanifold of the space of positive semi-definite Hermitian operators in $\mathcal{H}$, provided that the set $\{\frac{\partial{\check{\rho}_{\theta}}}{\partial \theta_{0}}, \dots, \frac{\partial{\check{\rho}_{\theta}}}{\partial \theta_{m-1}}\}$
 is linearly
independent.

The objective is to project the dynamics~\eqref{stra} onto $\check{\mathcal{M}}$  and to deduce the corresponding dynamics of the parameter $\theta.$ The latter can be obtained by noting  that the chain rule for Stratonovich stochastic calculus yields
\begin{equation}
d \check\rho_\theta=\sum_{j=0}^{m-1} \check{\partial}_j \circ d \theta_j,
\label{chain}
\end{equation}
where $\check{\partial}_j:=\frac{\partial \check{\rho}_\theta}{\partial \theta_j}.$
If the operators $A_j$  are mutually commuting, the above derivative can be explicitly computed as 
\begin{align*}
\check{\partial}_j=\frac{1}{2}\left(A_j \check\rho_\theta+\check\rho_\theta A_j\right)
\end{align*}

The choice of $m$ and of the operators $A_j$ should be done in a way that the computational complexity of the projected dynamics is significantly reduced compared to the original one, while keeping the dynamics of $\check\rho_t$ and $\check\rho_{\theta_t}$ as close as possible.

From now on, we assume that $L$ is a {Hermitian $N \times N$ matrix} 
%({\textcolor{blue}{a Hermitian ($N \times N$)-matrix}}) %($L=L^{\dagger}$), 
so that  by spectral decomposition, we can write $L=\sum_{j=1}^{n_{0}} \lambda_j \pi_j$, where $\lambda_j$  are the real distinct eigenvalues of $L$ and $\pi_j$ are the corresponding orthogonal projection operators{, for $j=1, \ldots, n_0$, with $n_0\leq N$}. %To rephrase, this paragraph describes how the submanifold in the equation~\eqref{manifold} is designed to ensure that the error between $\tilde{\rho}_t$ and $\hat \rho_t$ is small. This involves setting the matrices $A_j$ equal to the projection operators $P_j$, and choosing the appropriate number of distinct eigenvalues for the measurement operator $L$.
In the following, we take
\begin{equation}
\begin{array}{l}
m=n_0, \quad 
A_j=\pi_{j+1},
\end{array}
\label{Assumption}
\end{equation}
in the definition of $\check{\mathcal{M}}.$ 
Furthermore, we assume from now on that $\bar\rho_0$ is chosen so that $\mathrm{Tr}(\bar \rho_0 A_k)\neq 0$ for every $k=0,\ldots,{m}-1$. %This ensures that $\mathrm{Tr}(\check \rho_\theta A_k)\neq 0$ for every $\theta\in\mathbb{R}^m$. 
{This ensures that $\mathrm{Tr}(\check \rho_\theta A_k)\neq 0$, as $ \mathrm{Tr}(\check \rho_\theta A_k)=\mathrm{Tr}(\sum_{i,j=0}^{m-1}e^{\frac{\theta_i+\theta_j}{2}}A_i\bar{\rho}_0A_j A_k)=e^{\theta_k}\mathrm{Tr}(\bar \rho_0 A_k)$ for every $\theta\in\mathbb{R}^m$.}

As it is explained in Appendix \ref{Orthogonal projection}, the projection can be defined %by applying 
via a map $\Pi_\theta$ from the space of Hermitian matrices to the tangent space of $\check{\mathcal{M}}.$ The projection filter can be obtained by applying such a map to each term of the dynamics of $\check{\rho}$ as follows 
\begin{align}
d \check{\rho}_{\theta_{t}}=\Pi_{\theta_{t}}\left(-i\left[H, \check{\rho}_{\theta_{t}}\right]\right)d t+\Pi_{\theta_{t}}\left({\hat{\mathcal{F}}} \left(\check{\rho}_{\theta_{t}}\right)\right) d t+\Pi_{\theta_{t}}\left(\sqrt{\eta}(L \check{\rho}_{\theta_{t}}+\check{\rho}_{\theta_{t}} L)\right) \circ d Y_{t}.\label{Projection}
\end{align}
From the definition of such a map, we deduce that 
\begin{equation}
\begin{aligned}
d \check{\rho}_{\theta_t}=&\sum_{j=0}^{m-1} \frac{1}{2}\left(A_j \check{\rho}_{\theta_t}+\check{\rho}_{\theta_t} A_j\right)\left(\frac{\operatorname{Tr}\left({i} \check{\rho}_{\theta_t}\left[H, A_j\right]\right)}{\operatorname{Tr}\left(\check{\rho}_{\theta_t} A_j\right)}\right)  d t- \\
&\eta (L^2\check{\rho}_{\theta_t}+\check{\rho}_{\theta_t}L^2) d t+\sqrt{\eta} (L\check{\rho}_{\theta_t}+\check{\rho}_{\theta_t}L) \circ d Y_t \\
=& \sum_{j=0}^{m-1} \frac{1}{2}\left(A_j \check{\rho}_{\theta_t}+\check{\rho}_{\theta_t} A_j\right)\bigg(\frac{\operatorname{Tr}\left({i} \check{\rho}_{\theta_t}\left[H, A_j\right]\right)}{\operatorname{Tr}\left(\check{\rho}_{\theta_t} A_j\right)} d t- %\\& 
2\eta \lambda_{j+1}^2 d t+2\sqrt{\eta} \lambda_{j+1} \circ d Y_t\bigg).
\label{unnormalized stra}
\end{aligned}
\end{equation}
Now it is sufficient to use the relation~\eqref{chain} {and~\eqref{unnormalized stra}} to get {the following equation in It\^o form}
\begin{equation}
d \theta_t=G\left(\theta_t\right)^{-1} E\left(\theta_t\right) d t-2 \eta \alpha d t+2 \sqrt{\eta}\beta d Y_t,
\label{Quantum projection}
\end{equation}
with 
$$
\begin{gathered}
G\left(\theta_t\right)=\operatorname{diag}\left\{\operatorname{Tr}\left(\check{\rho}_{\theta_t} A_1\right), \ldots, \operatorname{Tr}\left(\check{\rho}_{\theta_t} A_m\right)\right\}, \\
E\left(\theta_t\right)=\left(\operatorname{Tr}\left(i \check{\rho}_{\theta_t}\left[H, A_1\right]\right), \ldots, \operatorname{Tr}\left(i \check{\rho}_{\theta_t}\left[H, A_m\right]\right)\right)^{T} \\
\alpha=\left(\lambda_1^2, \ldots, \lambda_m^2\right)^{T},~\beta=\left(\lambda_1, \ldots, \lambda_m\right)^{T} .
\end{gathered}
$$
{Note that $G\left(\theta_t\right)$ is invertible since  $\mathrm{Tr}(\check \rho_\theta A_k)\neq 0$ for $k=0,\ldots, m-1,$ as explained above.}\\
%To obtain the final normalized filtering equation of $\tilde{\rho}_{\theta_t}$ in the exponential quantum projection filtering method, we follow the steps outlined in equations~\eqref{chain} and~\eqref{Quantum projection}. Firstly, we compute $d \check{\rho}_{\theta_t}$ using these equations. This enables us to derive the desired filtering equation for $\tilde{\rho}_{\theta_t}$. Essentially, the process involves deriving the differential equation governing the evolution of the quantum state, which is then used to compute the filtered state. It is important to note that this method relies heavily on the principles of quantum mechanics and is used extensively in various quantum information processing tasks. Finally we get
%We can express equation~\eqref{unnormalized stra} in the form of an It\^{o} stochastic differential equation. This involves a transformation that allows us to represent the equation in a more suitable mathematical form for certain types of analysis. By using this transformation, we can obtain the desired form of the equation that incorporates stochastic elements, which is often useful for modeling and analyzing complex systems subject to random fluctuations. It is important to note that this transformation is a standard technique used in stochastic calculus and is commonly applied in various fields of science and engineering. The unnormalized It\^{o} stochastic differential equation has the form
The unnormalized It\^{o} stochastic differential equation takes the form
\begin{equation}
\begin{aligned}
d \check{\rho}_{\theta_t}=&\frac{1}{2}\sum_{j=0}^{m-1} \left(A_j \check{\rho}_{\theta_t}+\check{\rho}_{\theta_t} A_j\right)\left(\frac{\operatorname{Tr}\left({i} \check{\rho}_{\theta_t}\left[H, A_j\right]\right)}{\operatorname{Tr}\left(\check{\rho}_{\theta_t} A_j\right)}\right)  d t+  \\
&\eta \left(L\check{\rho}_{\theta_t}L-\frac{1}{2}(L^2\check{\rho}_{\theta_t}+\check{\rho}_{\theta_t}L^2)\right) d t+\sqrt{\eta} (L\check{\rho}_{\theta_t}+\check{\rho}_{\theta_t}L) d Y_t. \label{unnormalized Ito}
\end{aligned}
\end{equation}
By applying It\^{o} rules we can obtain the equation for the normalized solution of the projection filter ${\rho}_{\theta_t}:=\frac{\check{\rho}_{\theta_t}}{\operatorname{Tr}\left(\check{\rho}_{\theta_t}\right)}$, which is given as follows 
\begin{equation}
\begin{aligned}
d{\rho}_{\theta_t}%=d\left(\frac{\check{\rho}_{\theta_t}}{\operatorname{Tr}\left(\check{\rho}_{\theta_t}\right)}\right)
=&\frac{1}{2}\sum_{j=0}^{m-1} \left(A_j {\rho}_{\theta_t}+{\rho}_{\theta_t}A_j\right)\left(\frac{\operatorname{Tr}\left({i} {\rho}_{\theta_t}\left[H, A_j\right]\right)}{\operatorname{Tr}\left({\rho}_{\theta_t} A_j\right)}\right)  d t +  \\
&\eta\left(L{\rho}_{\theta_t} L-\frac{1}{2}(L^2{\rho}_{\theta_t}+{\rho}_{\theta_t} L^2)\right) d t+\mathcal{G}( {\rho}_{\theta_t}) %d \tilde{W_t}  
(d Y_t-2\sqrt{\eta}\operatorname{Tr}\left({\rho}_{\theta_t}L\right) d t).
\label{normalized Ito}
\end{aligned}
\end{equation}
The evolution of the equation~\eqref{normalized Ito} starting from $\bar\rho_0$ takes place almost surely on the set \[\mathcal{M}=\left\{\frac{\rho}{\mathrm{Tr}(\rho)}\mid\rho\in\check{\mathcal{M}}\right\}\subset \mathcal{S}_N.\] %for small enough time. 

In the following, we will study a feedback stabilization problem based on the projection filter given in~\eqref{normalized Ito}. %This is encouraging in contrast to the error analyzed in this section.

\section{Quantum projection filter approach  for feedback stabilization}\label{C}
In this section, we consider a $N$-level quantum angular momentum system undergoing continuous-time non-demolition measurements. This is a typical example considered in the literature, see e.g., \cite{Vanhandel3}. The aim is to study feedback stabilization of such a system with a feedback which depends on the projection filter. 

First let us introduce the mathematical model. 
The evolution of such a physical system is described by the following stochastic master equation 
\begin{equation}
\begin{array}{l}
d \rho_t=-i\left[\omega J_z+uJ_y, \rho_t\right] d t+\mathcal{F}( \rho_t )d t+\mathcal{G}( \rho_t) d W_t \\
d Y_t=d W_t+2\sqrt{\eta}\operatorname{Tr}\left(J_z \rho_t)\right) d t
\end{array},
 \label{Optimal Filter}
\end{equation}
where
\begin{itemize}
\item The measurement operator is denoted by $J_z$ which is the self-adjoint angular momentum along the $z$-axis. The matrix representation of $J_z$ is
$$
J_z=\left[\begin{array}{lllll}
J & & & & \\
& J-1 & & & \\
& & \ddots & & \\
& & & -J+1 & \\
& & & & -J
\end{array}\right],
$$
with  $J:=\frac{N-1}{2}.$
\item Similarly, $J_y$ represents the self-adjoint angular momentum along the $y$-axis. It is defined as follows:
$$
J_y=\left[\begin{array}{ccccc}
0 & -i c_1 & & & \\
i c_1 & 0 & -i c_2 & & \\
& \ddots & \ddots & \ddots & \\
& & i c_{2 J-1} & 0 & -i c_{2 J} \\
& & & i c_{2 J} & 0
\end{array}\right],
$$
with $c_q=\frac{1}{2} \sqrt{(2 J+1-q) q}$ for $q=1,\ldots,2J.$
\item We suppose that the feedback $u$ is  adapted to the algebra generated by the observation process up to time $t$, for every $t>0.$% denoted by $\mathbb{F}_t^{y}:=\sigma(Y_s, 0\leq s \leq t)$.

\item $\omega>0$ is a physical parameter.

\end{itemize}

\medskip

The following describes the time evolution of the pair ($\rho_t,{\rho}_{\theta_t}$) $\in \mathcal{S_N} \times \mathcal{S_N}$,
\begin{equation}
d \rho_t=-i\left[\omega J_z+uJ_y, \rho_t\right] d t+\mathcal{F}( \rho_t )d t+\mathcal{G}( \rho_t) d W_t \label{coupled1}
\end{equation}

\begin{equation}
    \begin{aligned}
d {\rho}_{\theta_t}=&\frac{1}{2}\sum_{j=0}^{2J} \left(A_j {\rho}_{\theta_t}+{\rho}_{\theta_t} A_j\right)\left(\frac{\operatorname{Tr}\left({i} {\rho}_{\theta_t}\left[uJ_y, A_j\right]\right)}{\operatorname{Tr}\left({\rho}_{\theta_t} A_j\right)}\right)  d t+  \\
&{\eta} \mathcal{F}({\rho}_{\theta_t} )
d t  +2\sqrt{{\eta}} \mathcal{G}({\rho}_{\theta_t})\left(Tr(J_z \rho_t)-Tr(J_z {\rho}_{\theta_t})\right)d t +\mathcal{G}( {\rho}_{\theta_t}) d {W_t}
\label{coupled2}
\end{aligned}
\end{equation}
The aim of this section is to study feedback stabilization of the original filter $\rho$ toward one of its equilibria $A_{\bar n}$ with $\bar n\in\{0,\ldots,2J\}.$ Here we note that $A_k$ corresponds to the rank-one projector associated with the eigenvalue $J-k$ of $J_z$ for $k=0,\ldots,2J.$ It is easy to see that $A_{k}$ is in $\mathrm{clos}(\mathcal{M})$ for $k=0,\ldots,m-1,$ where  $\mathrm{clos}(\cdot)$ indicates the topological closure. The key point is that the feedback $u$ will be assumed to depend on $\rho_\theta$. %to make depend the feedback $u$ on $\rho_\theta.$  
In this case the above equations are coupled, i.e., the evolution of $\rho$ depends on $\rho_\theta$ through $u$ and, vice versa, the evolution of $\rho_\theta$ depends on $\rho$ through the measurement process $Y$.
%\subsection{Quantum State Reduction}

In absence of feedback, in many papers, see e.g., \cite{adler2001martingale,van2005modelling,Vanhandel3,benoist2014large}, it was shown that the quantum filter~\eqref{coupled1} randomly selects  one of the equilibria $A_j.$ This phenomenon is called quantum state reduction. 
In the following theorem, we show that the same asymptotic behavior holds true for the projection filter~\eqref{coupled2} in absence of feedback. 

Here we apply Definition \ref{stability} for the notion of stochastic stability and, for the  corresponding distance, we consider the Bures distance defined as follows. 
\begin{definition}[Bures distance, see \cite{bengtsson2017geometry}]
\begin{itemize}
\item The Bures distance between two density matrices $\rho^{(1)}$ and $\rho^{(2)}$ is given by $\sqrt{2-2\sqrt{\mathcal{F}(\rho^{(1)},\rho^{(2)})}}$, with the fidelity $\mathcal{F}(\rho^{(1)},\rho^{(2)})=\operatorname{Tr}(\sqrt{\sqrt{\rho^{(1)}}\rho^{(2)}\sqrt{\rho^{(1)}}}).$ 
\item The Bures distance between $\rho$ and a pure state $\sigma$ is reduced to $d_B\left(\rho, \sigma\right)=\sqrt{2-2 \sqrt{\operatorname{Tr}\left(\rho \sigma\right)}}.$%We then define the neighbourhood $B_r(\rho)$ of $\rho \in \mathcal{S}_N$ as $B_r(\rho):=\left\{\sigma \in \mathcal{S}_N \mid d_B(\rho, \sigma)<r\right\}$

%The Bures distance between $\rho$ and a pure state $\boldsymbol{\rho}_n=e_ne_n^*$, where $n$ is an integer between $0$ and $2J$ which is given by $d_B\left(\rho, \boldsymbol{\rho}_n\right)=\sqrt{2-2 \sqrt{\rho_{n, n}}}$, where $\rho_{n, n}:=\operatorname{Tr}\left(\rho \boldsymbol{\rho}_n\right)$ denotes the projection of $\rho$ on $\boldsymbol{\rho}_n$. We then define the neighbourhood $B_r(\rho)$ of $\rho \in \mathcal{S}_N$ as $B_r(\rho):=\left\{\sigma \in \mathcal{S}_N \mid d_B(\rho, \sigma)<r\right\}$.\\
\item The Bures distance between two elements in $\mathcal{S}_N \times \mathcal{S}_N$ can be defined as $${d}_B\left((\rho^{(1)}, \tilde{\rho}^{(1)}),(\rho^{(2)}, \tilde{\rho}^{(2)})\right)=d_B\left(\rho^{(1)}, \rho^{(2)}\right)+d_B\left(\tilde{\rho}^{(1)}, \tilde{\rho}^{(2)}\right).$$ %The ball of radius $r$ around a given pair $(\rho,\tilde{\rho})$ can then be defined as $$
% \mathbf{B}_r(\rho, \tilde{\rho}):=\left\{(\sigma, \tilde{\sigma}) \in \mathcal{S}_N \times \mathcal{S}_N \mid \mathbf{d}_B((\rho, \tilde{\rho}),(\sigma, \tilde{\sigma}))<r\right\} .
% $$
\end{itemize}

\end{definition}
{Denote by  $\mathbb P^{\prime}$ the probability measure, equivalent to $\mathbb P,$ that makes the observation process $Y_t$ a Wiener process. The existence of such a probability measure is guarenteed by Girsanov theorem. Its corresponding expectation is defined by $\mathbb E^{\prime.}$}
\begin{theorem}
% ( $N$-level quantum state reduction).%
 For system~\eqref{coupled2}, with no external control ($u \equiv 0$) and initial condition $\bar\rho_0 \in \mathcal{S}_N$, the set of invariant quantum states $\bar{E}:=\left\{{A}_{0}, \ldots, {A}_{2J}\right\}$ is exponentially stable both in mean and almost surely with respect to the probability measure $\mathbb P'$.
The average and sample Lyapunov exponent are less than or equal to $- \eta/2$. %Furthermore, the probability of the system converging to a particular invariant state $\boldsymbol{A}_{\bar{n}}$ in the set $\bar{E}$ is given by $\operatorname{Tr}\left(\bar\rho_0 \boldsymbol{A}_{\bar{n}}\right)$ for $\bar{n} \in\{1, \ldots, m\}$.

\end{theorem}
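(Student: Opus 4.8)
The plan is to reduce the statement to a scalar estimate on the \emph{populations} of the projection filter, and then to exhibit a Lyapunov function comparable to the Bures distance to $\bar{E}$ whose generator is bounded above by $-\tfrac{\eta}{2}$ times itself. With $u\equiv 0$ we have $[H,A_j]=0$, so \eqref{coupled2} collapses to the normalized QND filter \eqref{normalized Ito}. Writing $q_n(t):=\mathrm{Tr}(\rho_{\theta_t}A_n)$ and letting $\lambda_n$ be the eigenvalue of $J_z$ attached to $A_n$, the first observation is that the target distance sees only the populations, namely $d_B(\rho_{\theta_t},\bar{E})^2=2-2\sqrt{\max_n q_n}$, since $\mathrm{Tr}(\rho_{\theta_t}A_n)$ ignores the off-diagonal part of $\rho_{\theta_t}$. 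Under $\mathbb{P}'$ the process $Y_t$ is a Wiener process, so the parameters solve the linear equations $\theta_n(t)=\theta_n(0)-2\eta\lambda_n^2 t+2\sqrt{\eta}\lambda_n Y_t$; in particular each $e^{\theta_n}$ is a $\mathbb{P}'$-martingale and $q_n=\frac{p_n e^{\theta_n}}{\sum_k p_k e^{\theta_k}}$ with $p_k=\mathrm{Tr}(\bar\rho_0 A_k)>0$.

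Next I would introduce $V:=\sum_{n<m}\sqrt{q_n q_m}$, which vanishes exactly on $\bar{E}$, and verify that it is two-sidedly comparable to $d_B(\cdot,\bar{E})$ near the set: for $\max_n q_n\ge\tfrac12$ one gets $V\ge\sqrt{q_{n^*}(1-\max_n q_n)}$ and hence $d_B(\rho_{\theta_t},\bar{E})\le 2V$, together with a matching lower bound. The cleanest way to produce the rate is to work with the unnormalized variables: an It\^o computation under $\mathbb{P}'$ shows that each $e^{(\theta_n+\theta_m)/2}$ is a supermartingale with drift coefficient $-\tfrac{\eta}{2}(\lambda_n-\lambda_m)^2$, so that $U:=\sum_{n<m}\sqrt{p_np_m}\,e^{(\theta_n+\theta_m)/2}$ satisfies $\mathcal{L}U\le-\tfrac{\eta}{2}U$, where I used that the eigenvalue gaps of $J_z$ obey $|\lambda_n-\lambda_m|\ge 1$. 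This is precisely where the exponent $\eta/2$ comes from.

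From $\mathcal{L}U\le-\tfrac{\eta}{2}U$, Dynkin's formula and Gr\"onwall's inequality give $\mathbb{E}'[U(t)]\le e^{-\eta t/2}U(0)$, the exponential-in-mean statement at the level of the unnormalized process. To reach the almost sure exponent I would combine the supermartingale property with a Borel--Cantelli argument along integer times, controlling the oscillations in between by a standard continuity estimate on the paths; alternatively one adapts the local stochastic stability analysis of \cite{liang2021robust,Liang}, which already packages the passage from an infinitesimal Lyapunov inequality to both the average and the sample Lyapunov exponents. Identifying $V$ with $d_B(\cdot,\bar E)$ through the comparison of the previous paragraph then transports these bounds to the filter itself.

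The step I expect to be the main obstacle is the \emph{normalization}. Differentiating the normalized populations $q_n$ directly under $\mathbb{P}'$ produces, alongside the favorable term $-\tfrac{\eta}{2}(\lambda_n-\lambda_m)^2\sqrt{q_nq_m}$, an indefinite drift proportional to $\langle J_z\rangle_{\rho_{\theta_t}}$, so $V$ is \emph{not} a supermartingale as it stands; the clean decay must be routed through the unnormalized quantities $e^{(\theta_n+\theta_m)/2}$, equivalently through the positive martingale $Z_t=\mathrm{Tr}(\check\rho_{\theta_t})=\sum_k p_k e^{\theta_k}$ implementing the change of measure, and one has to argue carefully that the resulting bound indeed controls $\mathbb{E}'[d_B]$ and not merely $\mathbb{E}'[U]$. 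A second delicate point, specific to $J_z$, is that its spectrum contains eigenvalues of equal modulus, so the argument must establish convergence to the \emph{set} $\bar{E}$ rather than to a single eigenstate, and $V$ must be shown to dominate $d_B(\cdot,\bar E)$ uniformly over these degenerate configurations.
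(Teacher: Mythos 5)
Your unnormalized computation is correct and is the sharpest part of the proposal: under the measure that makes $Y_t$ a Wiener process one indeed has $\theta_n(t)=\theta_n(0)-2\eta\lambda_n^2t+2\sqrt{\eta}\lambda_nY_t$, each $e^{(\theta_n+\theta_m)/2}$ has drift rate $-\frac{\eta}{2}(\lambda_n-\lambda_m)^2$, and so $\mathbb{E}'[U_t]\le e^{-\eta t/2}U_0$ for $U_t=\sum_{n<m}\sqrt{p_np_m}\,e^{(\theta_n(t)+\theta_m(t))/2}$. But the proposal stops exactly at the step the theorem needs, and that step cannot be carried out under this measure. Writing $Z_t=\mathrm{Tr}(\check\rho_{\theta_t})=\sum_kp_ke^{\theta_k(t)}$, you must pass from $U_t=Z_tV_t$ to $V_t$; under $\mathbb{P}'$ the martingale $Z_t$ converges a.s. to $\sum_{k:\lambda_k=0}p_ke^{\theta_k(0)}$, which is $0$ whenever $J$ is half-integer, and it vanishes at a rate that exactly eats the decay of $U_t$. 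For $J=1/2$ this is explicit: $\sqrt{q_0q_1}=\sqrt{p_0p_1e^{\theta_0(0)+\theta_1(0)}}\,/\,\big(p_0e^{\theta_0(0)}e^{\sqrt{\eta}Y_t}+p_1e^{\theta_1(0)}e^{-\sqrt{\eta}Y_t}\big)$, which by recurrence of $Y_t$ returns to its maximal value $1/2$ at arbitrarily large times almost surely, and whose $\mathbb{P}'$-expectation decays like $t^{-1/2}$, not exponentially. So the "main obstacle" you flag is not a technical nuisance: no transfer from $U$ to $V$ (or to $d_B$) can exist when $J$ is half-integer, and the equal-modulus eigenvalue pairs you mention at the end are precisely the configurations responsible. (Your argument does close when $J$ is an integer, since then $Z_t$ is bounded below by the constant term $p_{n_0}e^{\theta_{n_0}(0)}$ coming from the zero eigenvalue.)

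The paper's proof takes the route you explicitly rejected: it asserts $\mathcal{L}V\le-\frac{\eta}{2}V$ directly for the normalized populations, applies Gr\"onwall, compares $V$ with the Bures distance (with $C_1=1/2$, $C_2=J(2J+1)$), and invokes \cite{liang2019exponential} for the passage to almost-sure convergence. As your own drift computation shows, that generator inequality holds exactly when the $q_n$ are martingales, i.e., when the driving Brownian motion is the projection filter's own innovation $\bar W_t=Y_t-2\sqrt{\eta}\int_0^t\mathrm{Tr}(J_z\rho_{\theta_s})\,ds$; with respect to the measure making $Y_t$ itself a Wiener process the additional drift $-2\eta\,\mathrm{Tr}(J_z\rho_\theta)\big(\lambda_n+\lambda_m-2\,\mathrm{Tr}(J_z\rho_\theta)\big)$ survives and is positive near the equilibria (for $J=1/2$, $\mathcal{L}V=\eta[(q_0-q_1)^2-\tfrac12]V\to+\tfrac{\eta}{2}V$). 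In other words, the inequality at the heart of the paper's sketch is the one valid under the innovation measure of the projection filter, where the whole chain closes and your normalization obstacle simply does not arise. The repair for your proposal is therefore not a cleverer $U$-to-$V$ comparison, but running the entire Lyapunov argument with $\bar W_t$ as the Brownian motion; as written, with $Y_t$ taken to be the Wiener process, the proposal cannot reach either the in-mean or the almost-sure exponential statement.
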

\begin{proof}
%Let $I$ be the set of indices where $(\tilde{\rho}_{\theta})_{k,k}(0)=0$, and let $\mathcal{S}_j$ be the set of density matrices where $(\tilde{\rho}_{\theta})_{k,k}=0$ if and only if $k\in I$. It can be easily seen that 
%By using lemma~\eqref{Lemma}, we can see that
%$\mathcal{S}_j$ is almost surely invariant for~\eqref{Estimated filter} \cite{liang2019exponential}. 
The proof is similar to \cite{liang2019exponential}. For completeness, we present the sketch of the proof. We consider
%the function $V(\tilde{\rho}_{\theta})$ as a candidate Lyapunov function, which is defined as%
the candidate Lyapunov function
$$
V({\rho}_{\theta})=\frac{1}{2} \sum_{\substack{\bar n, \bar m=0 \\ \bar n \neq \bar m}}^{2J} \sqrt{\operatorname{Tr}\left({\rho}_{\theta}{A}_{\bar{n}}\right) \operatorname{Tr}\left({\rho}_{\theta}{A}_{\bar{m}}\right)},
$$
where $V({\rho}_{\theta})=0$ if and only if ${\rho}_{\theta} \in \bar{E}$.
%Since $\mathcal{S}_j$ is invariant for~\eqref{Estimated filter} with $u \equiv 0$, and $V$ is twice continuously differentiable when restricted to $\mathcal{S}_j$, we can compute $\mathscr{L} V(\tilde{\rho}_{\theta}) \leq-\frac{ \eta }{2} V(\tilde{\rho}_{\theta})$. Using It\^{o}'s formula, for all $\bar\rho_0 \in \mathcal{S}$, we get
%$$
%\mathbb{E^\prime}\left(V\left(\tilde{\rho}_{\theta_t}\right)\right)=V\left(\bar\rho_0\right)+\int_0^t \mathbb{E^\prime}\left(\mathscr{L} V\left(\tilde{\rho}_{\theta_s}\right)\right) d s \leq V\left(\bar\rho_0\right)-\frac{\eta }{2} \int_0^t \mathbb{E^\prime}\left(V\left(\tilde{\rho}_{\theta_s}\right)\right) d s .
%$$
%{should depend on the probability, check this as it seems to us it is not P'}
First, %we 
one can show that $\mathcal L V({\rho}_{\theta})\leq -\frac{\eta}{2}V({\rho}_{\theta}).$ Then by using It\^{o}'s formula and the Grönwall inequality, %, for all $\rho_0 \in \mathcal{S}_N$, 
%we obtain 
one obtains an upper bound on the expected value of the Lyapunov function as a function of time, i.e, $\mathbb{E^\prime}\left(V\left({\rho}_{\theta_t}\right)\right)  \leq V\left(\bar\rho_0\right) e^{-\frac{ \eta }{2} t}.$
%We can then show that the candidate Lyapunov function is limited by the Bures distance from $\bar{E}$. To do this, we begin by manipulating the Lyapunov function and deriving the following lower bound:
%\begin{equation}
%V(\tilde{\rho}_{\theta})\geq \frac{1}{2} \sum_{\bar n=0}^{2J} \sqrt{(\tilde{\rho}_{\theta})_{\bar n,\bar n} \left(1-(\tilde{\rho}_{\theta})_{\bar n,\bar n}\right)} \geq \frac{d_B(\tilde{\rho}_{\theta},\bar{E})}{2} \sum_{\bar n=0}^{2J} \sqrt{(\tilde{\rho}_{\theta})_{\bar n,\bar n}}
%\end{equation}
%Since $\sum_{\bar n=0}^{N-1} \sqrt{(\tilde{\rho}_{\theta})_{\bar n,\bar n}} \geq \sum_{\bar n=0}^{N-1} \tilde{\rho}_{\bar n,\bar n} = 1$, we can conclude that $\frac{1}{2} d_B(\tilde{\rho}_{\theta}, \bar{E}) \leq V(\tilde{\rho}_{\theta})$.\\
%Let's demonstrate the reverse inequality. Suppose that $d_B(\tilde{\rho}_{\theta}, \bar{E})=\sqrt{2-2 \sqrt{(\tilde{\rho}_{\theta})_{\bar{n}, \bar{n}}}}$ for an index $\bar{n}$. Then, for $m \neq \bar{n}$, we have $\sqrt{(\tilde{\rho}_{\theta})_{m, m}} \leq \sqrt{1-(\tilde{\rho}_{\theta})_{\bar{n}, \bar{n}}} \leq d_B(\tilde{\rho}_{\theta}, \bar{E})$. As a result, each term in $V(\tilde{\rho}_{\theta})$ is less than or equal to $d_B(\tilde{\rho}_{\theta}, \bar{E})$, and $V(\tilde{\rho}_{\theta}) \leq J(2J+1)  d_B(\tilde{\rho}_{\theta}, \bar{E})$.
Moreover, by comparing $V$ with the Bures distance, one gets the following inequality:
%\begin{equation}
%C_1 d_B(\tilde{\rho}_{\theta}, \bar{E}) \leq V(\tilde{\rho}_{\theta}) \leq C_2 d_B(\tilde{\rho}_{\theta}, \bar{E}) \label{distance}
%\end{equation}
%where $C_1=1/2$ and $C_2=J(2J+1)$. It implies that
$$
\mathbb{E^\prime}\left(d_B\left({\rho}_{\theta_t}, \bar{E}\right)\right) \leq \frac{C_2}{C_1} d_B\left(\bar\rho_0, \bar{E}\right) e^{-\frac{ \eta }{2} t}, %\quad \forall\bar\rho_0 \in \mathcal{S}_N 
$$
with  $C_1=1/2$ and $C_2=J(2J+1)$. This means that the set $\bar{E}$ is exponentially stable in  mean with  average Lyapunov exponent less than or equal to $- \eta/2$. This implies the {almost sure convergence}, see \cite{liang2019exponential} for more details.

\end{proof}
%\begin{Remark}
The above theorem does not guarantee that the trajectories of the equations~\eqref{coupled1} and~\eqref{coupled2} select the same limit. The approaches developed in \cite{benoist2014large} and \cite{bompais2023asymptotic} suggest that these limits coincide (as we assumed that $\mathrm{Tr}(\bar \rho_0 A_k)\neq 0$ for every $k$). Consequently, it is reasonable to %anticipate 
guess that a feedback control mechanism, utilizing the quantum projection filter, can be employed to stabilize the system and steer it toward a desired eigenstate of $L$, similar to the approaches taken in previous studies like \cite{liang2021robust}, \cite{liang2019exponential}, and \cite{Vanhandel3}. In the following, we give sufficient conditions on the feedback controller ensuring  stabilization.

%In \cite{benoist2014large}, it is demonstrated by the authors that in the situation where $u$ is always zero, and with the same parameters and $\eta$ equal 1,
%and $\hat{\eta}$ equal 1, $M$ is equal to $\hat{M}$, and $w$ equals $\hat{w}$,
%the trajectories of the interconnected system described in equations~\eqref{coupled1} and~\eqref{coupled2} will almost certainly exponentially converge to a subset consisting of the equilibria set's elements $\left\{\left(\boldsymbol{\rho}_0, \boldsymbol{\rho}_0\right)=\left(\boldsymbol{A}_1, \boldsymbol{A}_1\right), \ldots,\left(\boldsymbol{\rho}_{2 J}, \boldsymbol{\rho}_{2 J}\right)=\left(\boldsymbol{A}_{m}, \boldsymbol{A}_{m}\right)\right\}$. Also, it is shown in \cite{bompais2023asymptotic}, that even when we have different parameters with $\eta \neq 1$, we still guarantee the solutions exponentially convergence to the same limit.
%\end{Remark}
%\subsection{Feedback Stabilization of the Coupled System}
Concretely, we will give conditions that ensure the exponential stabilization of the pair $\left(\rho_t, {\rho}_{\theta_t}\right)$ toward the target state $\left({A}_{\bar{n}}, {A}_{\bar{n}}\right)$, with $\bar{n}\in\{0,\ldots,2J\}.$ 
For this purpose, we consider the following assumptions.

% \medskip

% {
% {\textbf{A0}:} There exists $i\in\{0,\ldots,2J\}$ such that $[A_j,\bar\rho_0] \neq 0.$

%\medskip

% This assumption is needed to ensure that the coefficient in front of feedback in the equation~\eqref{coupled2} is non-zero.

%obligatory to ensure that validity of the feedback control in the projected system. Indeed, if $[A_j,\rho_0] = 0$, then $\operatorname{Tr}\left({i} \tilde{\rho}_{\theta_t}\left[uJ_y, A_j\right]\right)=0$, and thus the term containing the control law is vanished. For this reason, we assume that this assumption is always available to guarantee the exponential stabilization of the pair $\left(\rho_t, \tilde{\rho}_{\theta_t}\right)$ toward a target state $\left(\boldsymbol{A}_{\bar{n}}, \boldsymbol{A}_{\bar{n}}\right)$, where $\bar{n}$ is an integer between $1$ and $m$. \\
%and thus we assume that $[A_j,\rho_0] \neq 0$ for $i \in \{1,\ldots,m\}$, to ensure that validity of the feedback control in the projected system.

\medskip

{\textbf{A1}:} %$u \in \mathcal{C}^1\left(\mathcal{S}_N, \mathbb{R}\right),$ 
$u \in \mathcal{C}^1\left(\mathrm{clos}(\mathcal{M}), \mathbb{R}\right),$
$u\left({A}_{\bar{n}}\right)=0$ and $u(\rho) \neq 0$ for all $\rho \in\left\{{A}_0, \ldots, {A}_{2J}\right\} \setminus {A}_{\bar{n}}$.

\medskip

The above hypothesis ensures that the coupled system~\eqref{coupled1}-\eqref{coupled2} contains exactly $N$ equilibria, given by $\left( A_{{n}}, A_{\bar{n}}\right)$, with $n\in\{0,\ldots,2J\}.$

\medskip

{\textbf{A2}:} $|u(\rho)| \leq c\left(1-\mathrm{Tr}(\rho A_{\bar{n}})\right)^p$ for 
$\rho\in\mathcal{M}$
%{ $\rho=\check{\rho}/\mathrm{Tr}(\check{\rho})$ with $\check{\rho}\in \mathcal{M}$ a voir si donner une definition de variete normalise} 
with $p>1 / 2$ and for some constant $c>0$.

\medskip

In the case $\bar{n} \in \{0,2J\}$ the assumption above  gives instability of $\left(A_n, A_{\bar{n}}\right)$ for $n \neq \bar{n}$.

\medskip

{\textbf{A3}:} $u(\rho)=0$ for all $\rho \in B_{\varepsilon}\left(A_{\bar{n}}\right)\cap \mathcal{M}$, for a sufficiently small $\varepsilon>0$.

\medskip

The above hypothesis is needed in the case $\bar{n} \notin\{0,2J\}$ for proving the instability of the target state $\left(A_n, A_{\bar{n}}\right)$ %with $n \neq \bar{n}$ 
as well as  the reachability of an arbitrary neighborhood of $\left(A_{\bar{n}}, A_{\bar{n}}\right)$. %for $\bar{n} \notin\{0,2J\}.$ 
Informally speaking, reachability means that any neighborhood of the target state can be reached in finite time. 

\medskip

Our final assumption yields, in combination with the previous assumptions, the reachability of the target state for the coupled system.

Let us define $P_n(\rho):=J-n-\operatorname{Tr}(J_z\rho)$ and $\mathbf{P}_n:=\left\{\rho\in\mathcal{M}\mid P_n(\rho)=0\right\}$
 % \[{\mathbf{P}}_n:=\left\{\rho\in\mathcal{S}_N \mid \exists\check\rho\in\mathcal M\mbox{ s.t. }\rho=\frac{\check\rho}{\operatorname{Tr}(\check\rho)},\ P_n({\rho})=0\right\}\] 
 for $n \in\{0, \ldots, 2 J\},$ and  $\mathscr{V}_z(\rho):=\operatorname{Tr}\left(J_z^2 \rho\right)-\operatorname{Tr}\left(J_z \rho\right)^2$. Set also $\Theta_n(\rho):=\operatorname{Tr}(i[J_y,\rho]A_n).$
%The subsequent theorem offers conditions in the Lyapunov form that ensure exponential stabilization toward the target state $\left(A_{\bar{n}}, A_{\bar{n}}\right)$. 
%One can demonstrate the theorem using the same steps as in (\cite{liang2019exponential}, Theorem 6.2).

\medskip

{\textbf{A4}:} For all $\rho \in {\mathbf{P}}_{\bar{n}} \setminus {A}_{\bar{n}},$ 
 \begin{equation*}
 2 \eta  \mathscr{V}_z(\rho) \operatorname{Tr}(\rho A_{\bar{n}}) >u(\rho) \Theta_{\bar{n}}(\rho). 
 \label{Variance function}
\end{equation*}
In the following, we establish a result concerning the exponential stability of the target state $\left(A_{\bar{n}}, A_{\bar{n}}\right)$ for the coupled equations~\eqref{coupled1}-\eqref{coupled2}.
\begin{theorem} 
Assume either that $\bar n\in\{0,2J\}$ and \textbf{A1}, \textbf{A2} hold true, or that $\bar n\notin\{0,2J\}$ and \textbf{A1}, \textbf{A3} and \textbf{A4} are verified. Assume moreover that  there exists a function $V(\rho, {\rho}_{\theta}),$ which is continuous on $\mathcal{S}_N \times \mathrm{clos}(\mathcal{M})$, and twice continuously differentiable on an almost surely invariant subset $\Gamma$ of $\mathcal{S}_N \times \mathrm{clos}(\mathcal{M})$, which includes $\operatorname{int}\left(\mathcal{S}_N\right) \times \mathcal{M},$ satisfying the following conditions 
\begin{itemize}
\item[(i)] there exist positive constants $C_1$ and $C_2$ such that $$C_1 {d}_B\left((\rho, {\rho}_{\theta}),\left(A_{\bar{n}}, A_{\bar{n}}\right)\right) \leq V(\rho, {\rho}_{\theta}) \leq C_2 {d}_B\left((\rho, {\rho}_{\theta}),\left(A_{\bar{n}}, A_{\bar{n}}\right)\right)$$ for all 
$(\rho, {\rho}_{\theta}) \in \mathcal{S}_N \times \mathrm{clos}(\mathcal{M})$, %\mathcal{S}_N$, and
\item[(ii)] there exists $C>0 $ such that $$\limsup _{(\rho, {\rho}_{\theta}) \rightarrow\left(A_{\bar{n}}, A_{\bar{n}}\right)} \frac{\mathscr{L} V(\rho, {\rho}_{\theta})}{V(\rho, {\rho}_{\theta})} \leq-C.$$
\end{itemize}
Then, starting from $\Gamma$,  the target state $\left(A_{\bar{n}}, A_{\bar{n}}\right)$ is almost surely exponentially stable for the coupled system~\eqref{coupled1}-\eqref{coupled2} with a sample Lyapunov exponent less than or equal to $-C-\frac{K}{2}$, where $K:=$ $\liminf _{(\rho, {\rho}_{\theta}) \rightarrow\left(A_{\bar{n}}, A_{\bar{n}}\right)} g^2(\rho, {\rho}_{\theta})$ and $g(\rho, {\rho}_{\theta}):=\frac{\partial V(\rho, {\rho}_{\theta})}{\partial \rho} \frac{\mathcal G(\rho)}{V(\rho, {\rho}_{\theta})}+\frac{\partial V(\rho, {\rho}_{\theta})}{\partial {\rho}_\theta} \frac{\mathcal G({\rho}_{\theta})}{V(\rho, {\rho}_{\theta})}$. \label{General exp}
\end{theorem}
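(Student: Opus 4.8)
The plan is to adapt the local-to-global stochastic Lyapunov scheme of \cite{liang2021robust} to the coupled pair $(\rho_t,\rho_{\theta_t})$, which is driven by the single Wiener process $W_t$ appearing in~\eqref{coupled1}--\eqref{coupled2}. The cornerstone is It\^o's formula applied to $\log V(\rho_t,\rho_{\theta_t})$ on the invariant set $\Gamma$, which, since the diffusion coefficient of $V$ equals $Vg$ with $g$ the quantity defined in the statement, produces
\begin{equation*}
d\log V=\left(\frac{\mathscr{L}V}{V}-\frac{1}{2}g^2\right)dt+g\,dW_t.
\end{equation*}
Integrating from $0$ to $t$ and dividing by $t$, the local-martingale term $\int_0^t g\,dW_s$ is handled by the strong law of large numbers for martingales, so that its contribution to $\frac1t\log V$ vanishes almost surely. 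Invoking hypothesis (ii), which bounds $\limsup \mathscr{L}V/V\le -C$ as $(\rho,\rho_\theta)\to(A_{\bar n},A_{\bar n})$, together with $K=\liminf g^2$, one obtains, on the event that the trajectory remains in a small neighborhood of the target, $\limsup_{t\to\infty}\frac1t\log V(\rho_t,\rho_{\theta_t})\le -C-\frac{K}{2}$. The two-sided comparison (i) between $V$ and the Bures distance then converts this into exponential decay of $d_B((\rho_t,\rho_{\theta_t}),(A_{\bar n},A_{\bar n}))$ at the announced rate, conditionally on non-exit.

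Next I would establish local stability, namely that trajectories issued sufficiently close to $(A_{\bar n},A_{\bar n})$ remain in $B_\varepsilon((A_{\bar n},A_{\bar n}))$ with probability arbitrarily close to one. This follows from the supermartingale-type behaviour of $V$ near the target implied by (i)--(ii), via a standard stopping-time and Doob maximal inequality argument showing that the exit probability tends to zero as the initial condition approaches the target. On the non-exit event the log-Lyapunov estimate of the first step applies verbatim, giving almost sure exponential convergence with sample exponent at most $-C-K/2$.

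To pass from local to global behaviour, I would use the assumptions \textbf{A1}--\textbf{A4} to exclude convergence towards the spurious equilibria and to guarantee reachability. Assumption \textbf{A1} identifies the equilibria of~\eqref{coupled1}--\eqref{coupled2} as the pairs $(A_n,A_{\bar n})$, $n\in\{0,\dots,2J\}$; \textbf{A2} (when $\bar n\in\{0,2J\}$), respectively \textbf{A3}--\textbf{A4} (when $\bar n\notin\{0,2J\}$), render every equilibrium with $n\neq\bar n$ unstable and ensure that an arbitrary neighborhood of $(A_{\bar n},A_{\bar n})$ is reached in finite time with positive probability, the variance inequality \textbf{A4} being precisely what forces the measurement back-action to dominate the control-induced drift along $\mathbf{P}_{\bar n}$. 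Feeding these facts into a support-theorem argument for the joint diffusion, and combining with the strong Markov property and a Borel--Cantelli/recurrence argument, shows that the trajectory almost surely enters the basin of attraction; together with the local result this yields almost sure exponential stability of $(A_{\bar n},A_{\bar n})$ with sample Lyapunov exponent at most $-C-K/2$.

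I expect the reachability and instability analysis of the coupled system to be the main obstacle. Because $\rho$ and $\rho_\theta$ share the same noise yet live on different state spaces --- with $\rho_\theta$ confined to the lower-dimensional manifold $\mathcal{M}$ --- and because the feedback $u$ enters both equations simultaneously, the support theorem has to be applied to the joint dynamics, and verifying that \textbf{A2}/\textbf{A3}/\textbf{A4} genuinely destabilize all spurious equilibria $(A_n,A_{\bar n})$ with $n\neq\bar n$ and make the target reachable is delicate. By contrast, the log-Lyapunov computation delivering the explicit rate $-C-K/2$ is routine once It\^o's formula is written down.
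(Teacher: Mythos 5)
Your overall scheme --- the $\log V$ It\^o computation giving the exponent $-C-\frac{K}{2}$, local stability in probability via a supermartingale/maximal-inequality argument, then instability of the spurious equilibria $(A_n,A_{\bar n})$, $n\neq\bar n$, plus reachability of the target to pass from local to global --- is exactly the scheme of the paper, which presents the theorem as an adaptation of \cite[Theorem 4.11]{liang2021robust}. The parts you call routine (the $\log V$ estimate, the strong law for the martingale term, the use of (i) to convert decay of $V$ into decay of the Bures distance) match the paper's sketch.

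However, the step you defer --- verifying that \textbf{A2}/\textbf{A3}/\textbf{A4} destabilize the spurious equilibria of the coupled system and make the target reachable, which you flag as ``delicate'' and propose to attack with a support-theorem argument for the joint diffusion --- is precisely where the paper's one substantive new input lies, and your proposal does not supply it. The paper's key observation is that the diagonal elements $\operatorname{Tr}({\rho}_{\theta_t}A_{n})$ of the projection filter~\eqref{coupled2} satisfy the SDE~\eqref{Diagonall}, which is \emph{the same} equation satisfied by the diagonal elements $\operatorname{Tr}(\hat\rho_t A_n)$ of the estimate filter~\eqref{estimate} treated in \cite{liang2021robust}. Since the instability and reachability arguments there (and the quantities $P_n$, $\mathscr{V}_z$, $\Theta_n$ entering \textbf{A2}--\textbf{A4}) depend on the coupled pair only through these diagonal elements, they transfer verbatim to the pair $(\rho_t,\rho_{\theta_t})$; no fresh analysis of the joint diffusion on $\mathcal{S}_N\times\mathcal{M}$ is needed. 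Without this identification, your plan amounts to rebuilding the entire instability/reachability theory for a new diffusion, which you acknowledge as the main obstacle but leave open; as written, your argument only yields the conditional (local, on the non-exit event) exponential estimate, not the almost sure exponential stability from $\Gamma$ claimed by the theorem.
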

\paragraph{\it{Sketch of the proof}} The proof is just an adaptation of the one given in \cite[Theorem 4.11]{liang2021robust} for the coupled equations of quantum filters and its corresponding projection filter. Roughly speaking, the instability of the equilibria $(A_n,A_{\bar n})$ for $n\neq\bar n$ can be shown by applying assumptions  \textbf {A1} and \textbf{A2} for the case $\bar n\in\{0,2J\}$, and assumptions  \textbf {A1} and \textbf{A3} for $\bar n\notin\{0,2J\}.$ The reachability of the target state $(A_{\bar n},A_{\bar n})$ can be shown by assuming the above hypotheses and applying similar arguments as in \cite{liang2021robust}. The key point for the validity of instability of the equilibria $(A_n,A_{\bar n})$ for $n\neq\bar n$ and the reachability of the target state $(A_{\bar n},A_{\bar n})$ is based on the fact that the diagonal element  of the projection filter~\eqref{coupled2}, i.e., $\operatorname{Tr}({\rho}_{\theta_t}A_{ n})$ and the diagonal element of the estimate filter~\eqref{estimate}, i.e.,  $\operatorname{Tr}(\hat{\rho}_tA_{ n})$ follow the same dynamical evolution given by 
\begin{equation}
    \begin{aligned}
d \operatorname{Tr}( {\rho}_{\theta_t}A_{ n})&=-u \Theta_{n}({\rho}_{\theta_t})dt+4\eta P_{\bar n}({\rho}_{\theta_t})\operatorname{Tr}({\rho}_{\theta_t}A_n)\mathcal{T}(\rho, {\rho}_{\theta_t})dt&\\&+4\sqrt{\eta} P_{n}({\rho}_{\theta_t})\operatorname{Tr}({\rho}_{\theta_t}A_n) d W_t,
\label{Diagonall}
\end{aligned}
\end{equation}
where $\mathcal{T}(\rho, {\rho}_{\theta}):= \operatorname{Tr}\left(J_z \rho\right)- \operatorname{Tr}\left(J_z {\rho}_{\theta}\right).$
Finally, similar to the proof of \cite[Theorem 4.11]{liang2021robust}, one can show by using conditions (i) and (ii) the almost sure convergence toward the target state $(A_{\bar n},A_{\bar n})$ and conclude that 
\[{\limsup}_{t\rightarrow\infty}\frac 1 t\log V(\rho_t,\rho_{\theta_t})\leq -C-\frac K 2,\quad \mbox{a.s.}\]

\medskip

In the following, we give the explicit rate of convergence for the coupled system~\eqref{coupled1}-\eqref{coupled2}.
\begin{theorem}
Consider the coupled system~\eqref{coupled1}-\eqref{coupled2}
%Moreover, %we consider the feedback controller's assumptions for the cases $\bar{n} \in \{1,m\}$ as given in Lemma \ref{Reachability1}.assume that $\mathbf{N0}$ is satisfied.
and suppose that $\rho_0\in\operatorname{int}\left(\mathcal{S}_N\right).$ 
\begin{itemize}
\item [-] Assume that hypotheses \textbf{A1} and \textbf{A2} hold. Then 
the pair $\left({A}_{\bar{n}}, {A}_{\bar{n}}\right)$ for $\bar n\in\{0,2J\}$ is almost surely exponentially stable with the sample Lyapunov exponent less than or equal to $-\eta.$ 

\item[-] Assume that hypotheses  \textbf{A1}, \textbf{A3}, and \textbf{A4} hold. Then the target state $\left({A}_{\bar{n}}, {A}_{\bar{n}}\right)$ with $\bar n\notin\{0,2J\}$ is almost surely exponentially stable with the sample Lyapunov exponent less than or equal to $-\frac{\eta}{2}.$ 
\end{itemize}
\label{exponential stability}

\end{theorem}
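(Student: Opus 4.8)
The plan is to derive both assertions from Theorem~\ref{General exp} by exhibiting an explicit Lyapunov function and computing the constants $C$ and $K$ entering its conclusion, the sample exponent being $-C-K/2$. I would take
\[
V(\rho,\rho_\theta)=\sqrt{1-\operatorname{Tr}(\rho A_{\bar n})}+\sqrt{1-\operatorname{Tr}(\rho_\theta A_{\bar n})},
\]
the natural candidate since it vanishes exactly at the target $(A_{\bar n},A_{\bar n})$ and is built from the two target populations whose dynamics are recorded in~\eqref{Diagonall}. The hypothesis $\rho_0\in\operatorname{int}(\mathcal S_N)$ keeps the actual filter full rank, so that $\Gamma=\operatorname{int}(\mathcal S_N)\times\mathcal M$ is an almost surely invariant set on which $V$ is smooth (both radicands stay strictly positive away from the target), matching the regularity required by Theorem~\ref{General exp}. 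The global ingredients absorbed from \cite{liang2021robust} inside that theorem, namely instability of the spurious equilibria $(A_n,A_{\bar n})$, $n\neq\bar n$, and reachability of the target, are supplied by \textbf{A1},\textbf{A2} when $\bar n\in\{0,2J\}$ and by \textbf{A1},\textbf{A3},\textbf{A4} when $\bar n\notin\{0,2J\}$.

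For condition (i) I would use the elementary factorization $1-x=(1-\sqrt x)(1+\sqrt x)$ with $x=\operatorname{Tr}(\rho A_{\bar n})\in[0,1]$: since $d_B(\rho,A_{\bar n})=\sqrt{2(1-\sqrt x)}$, one gets $\tfrac1{\sqrt2}\,d_B(\rho,A_{\bar n})\le\sqrt{1-x}\le d_B(\rho,A_{\bar n})$, and likewise for $\rho_\theta$, so $V$ is comparable to $d_B((\rho,\rho_\theta),(A_{\bar n},A_{\bar n}))$ with explicit constants $C_1=1/\sqrt2$, $C_2=1$.

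The core is condition (ii). Writing $q=\operatorname{Tr}(\rho A_{\bar n})$, I would apply It\^o's formula to $\sqrt{1-q}$ using $dq=-u\,\Theta_{\bar n}(\rho)\,dt+2\sqrt\eta\,q\,P_{\bar n}(\rho)\,dW$, where the QND population drift $\operatorname{Tr}(\mathcal F(\rho)A_{\bar n})$ vanishes and the $\omega J_z$ term drops since $[J_z,A_{\bar n}]=0$; the $\rho_\theta$-component is handled identically through~\eqref{Diagonall}. Near the target this yields $\mathscr L V/V\to-\tfrac\eta2\,\frac{P_{\bar n}^2}{(1-q)^2}$ and $g^2\to\eta\,\frac{P_{\bar n}^2}{(1-q)^2}$, the feedback contributions being negligible by \textbf{A2} (since $|u|\lesssim(1-q)^p$ with $p>1/2$ and $\Theta_{\bar n}=O(\sqrt{1-q})$), so the exponent $-C-K/2$ is governed by $\liminf\frac{P_{\bar n}^2}{(1-q)^2}$. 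Using the exact identity $P_{\bar n}(\rho)=\sum_{n\neq\bar n}(n-\bar n)\operatorname{Tr}(\rho A_n)$ and $1-q=\sum_{n\neq\bar n}\operatorname{Tr}(\rho A_n)$, when $\bar n\in\{0,2J\}$ all the weights $n-\bar n$ share the same sign and have modulus $\ge1$, whence $P_{\bar n}^2\ge(1-q)^2$ and the liminf equals $1$; this gives $C=\eta/2$, $K=\eta$, i.e.\ the exponent $-\eta$.

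The main obstacle is the interior case $\bar n\notin\{0,2J\}$. There the coefficients $n-\bar n$ change sign, so along approach directions lying in $\mathbf P_{\bar n}=\{P_{\bar n}=0\}$ (e.g.\ balanced mixtures of $A_{\bar n-1}$ and $A_{\bar n+1}$ as $q\to1$) the measurement-induced contraction of $1-q$ degenerates and $\liminf\frac{P_{\bar n}^2}{(1-q)^2}=0$, so the population-based estimate above becomes vacuous and $V$ alone cannot furnish a strictly negative normalized drift. This is exactly the role of \textbf{A3} and \textbf{A4}: \textbf{A3} switches off the feedback near the target while \textbf{A4}, by forcing $u\,\Theta_{\bar n}<2\eta\,\mathscr V_z(\rho)\operatorname{Tr}(\rho A_{\bar n})$ on $\mathbf P_{\bar n}\setminus\{A_{\bar n}\}$, prevents the trajectory from stalling on the degenerate set and, together with the reachability and instability inputs, restores control of $\limsup\mathscr L V/V$. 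Following the refined argument of \cite[Theorem~4.11]{liang2021robust}, one then obtains condition (ii) with the conservative constant, giving the weaker guaranteed exponent $-\eta/2$. I expect the delicate points to be making the interplay between the degenerate directions and \textbf{A4} quantitative enough to extract exactly $\eta/2$, and verifying the smoothness and almost sure invariance of $V$ on $\Gamma$ up to the boundary of $\mathcal S_N\times\mathrm{clos}(\mathcal M)$.
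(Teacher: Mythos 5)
Your boundary case $\bar n\in\{0,2J\}$ follows the paper's own route: the same Lyapunov function $\sqrt{1-\operatorname{Tr}(\rho A_{\bar n})}+\sqrt{1-\operatorname{Tr}(\rho_\theta A_{\bar n})}$, the same reduction to Theorem~\ref{General exp}, and essentially the same computation of $C$ and $K$ via $P_{\bar n}^2\ge\bigl(1-\operatorname{Tr}(\rho A_{\bar n})\bigr)^2$; that part is sound.

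The interior case $\bar n\notin\{0,2J\}$ contains a genuine gap, which you in fact identify before trying to argue around it. Condition (ii) of Theorem~\ref{General exp} is a local, deterministic requirement on $\limsup_{(\rho,\rho_\theta)\to(A_{\bar n},A_{\bar n})}\mathscr{L}V/V$, computed from the SDE coefficients and the chosen $V$. Under \textbf{A3} the control vanishes identically in a neighborhood of the target, so there the generator is the open-loop one; along your own degenerate approach directions (balanced mixtures of $A_{\bar n-1},A_{\bar n+1}$, i.e.\ inside $\mathbf{P}_{\bar n}$) one gets $\mathscr{L}V/V\to0$, hence the limsup equals $0$ and (ii) fails for every $C>0$ with your $V$. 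Invoking \textbf{A4} cannot repair this: \textbf{A4} constrains $u$ on $\mathbf{P}_{\bar n}$ away from the target and enters the proof only through reachability (it prevents the trajectory from stalling on the degenerate set); it has no bearing on a local limsup evaluated where $u\equiv0$. The missing idea is the paper's actual choice for this case: replace $V$ by $\sum_{n\neq\bar n}\sqrt{\operatorname{Tr}(\rho A_n)}+\sum_{n\neq\bar n}\sqrt{\operatorname{Tr}(\rho_\theta A_n)}$, as in \cite[Theorems 4.15 and 4.17]{liang2021robust}. This trades the signed aggregate $P_{\bar n}=\sum_{n\neq\bar n}(n-\bar n)\operatorname{Tr}(\rho A_n)$, which can vanish while $1-\operatorname{Tr}(\rho A_{\bar n})>0$, for the individual populations: the noise coefficient of $\operatorname{Tr}(\rho A_n)$ is proportional to $\operatorname{Tr}(\rho A_n)P_n(\rho)$ with $P_n(\rho)\to\bar n-n\neq0$ at the target, so It\^o's formula yields $\mathscr{L}\sqrt{\operatorname{Tr}(\rho A_n)}\le-\tfrac{\eta}{2}(n-\bar n)^2\sqrt{\operatorname{Tr}(\rho A_n)}$ near $(A_{\bar n},A_{\bar n})$, term by term and with no possible cancellation between populations on opposite sides of $\bar n$. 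Summing gives (ii) with $C=\eta/2$, and together with $K\ge0$ this produces the claimed exponent $-\eta/2$. In short, the second bullet of the theorem requires a different Lyapunov function, not a refined use of \textbf{A3}--\textbf{A4} with the first one.
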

\paragraph{\it{Sketch of the proof}} The proof is based on considering the Lyapunov candidates 
\[
V(\rho,\rho_\theta) = \left\{\begin{array}{ll}\sqrt{1-\operatorname{Tr}(\rho A_{\bar n})}+\sqrt{1-\operatorname{Tr}(\rho_{\theta} A_{\bar n})},&\mbox{ for }\bar n\in\{0,2J\},\\
\sum_{n \neq \bar{n}} \sqrt{\operatorname{Tr}(\rho A_n)}+\sum_{n \neq \bar{n}} \sqrt{\operatorname{Tr}(\rho_{\theta} A_n)},& \mbox{ for }\bar n\notin\{0,2J\},
\end{array}\right.
\]
and applying the previous theorem. This is similar to the proofs of \cite[Theorems 4.15 and 4.17]{liang2021robust} as the latter Lyapunov functions only depend on the diagonal elements of $\rho$ and $\rho_\theta.$ 
\section{Feedback design based on the exponential family structure}\label{D}
By the previous sections, we know that the projection filter dynamics can be equivalently studied through the equation~\eqref{Quantum projection}. In this section, based on the general results established in the previous section for $N$-level quantum systems, we design a stabilizing feedback based on a new  parametrization of the exponential family. This makes the previous study more interesting since %instead of working with $\rho_\theta,$ we 
working directly with the new parametrization reduces the complexity of the considered filter.

{Based on the fact that $\check\rho_{\theta}$ can be written as 
\begin{equation*}
\check\rho_{\theta} = \sum_{k,j=0}^{2J}e^{\frac{\theta_k+\theta_j}{2}}A_k\bar{\rho}_0A_j
\label{reduced-rho},
\end{equation*}} we can rewrite the normalized projection filter ${\rho}_{\theta}$ in terms of $\theta$ as follows \begin{equation}
{\rho}_{\theta}=\frac{\check{\rho}_{\theta}}{\operatorname{Tr}\left(\check{\rho}_{\theta}\right)}=\frac{\sum_{k,j=0}^{2J}e^{\frac{\theta_k+\theta_j}{2}}A_k\bar\rho_0 A_j}{\sum_{k=0}^{2J}e^{\theta_k}\operatorname{Tr}( A_k\bar\rho_0)}. \label{Manifold in theta}
\end{equation}
%This follows from the finite expansion of $e^{\frac{1}{2} \sum_{j=0}^{2J} \theta_j A_j}$.

Note that $\rho_\theta \neq A_{k}$ for every  $\theta\in\mathbb R^{2J+1}$ and $k=0,\ldots,2J$. In particular the target state $A_{\bar n}$ cannot be expressed as an equilibrium of the system in the coordinates $\theta$. 
%the target state $A_{\bar n}$ for any $\bar n$ cannot be expressed by the expression before for some values of $\theta\in\mathbb R^{2J+1}.$ However we can 
Furthermore, the parametrization $\theta\mapsto\rho_\theta$ is redundant in the sense that, for every shift $\theta'=\theta+\mu(1,\ldots,1)^T$ with $\mu\in \mathbb{R}$, one has $\rho_{\theta'}=\rho_\theta$.
In order to overcome %this issue
these issues, we define a new  coordinate system: let us take $\xi=\left(\xi_0,\ldots,\xi_{\bar n-1},\xi_{\bar n+1},\ldots,\xi_{2J} \right)^{T}\in\mathbb{R}^{2J}$, where $\xi_k=e^\frac{\theta_k-\theta_{\bar n}}{2}$, $k \neq \bar n $.  Then~\eqref{Manifold in theta} can be rewritten in the following form 
\begin{equation}
{\rho}_{\xi}=\frac{A_{\bar n}\bar\rho_0 A_{\bar n}+\sum_{k \neq \bar n}\xi_k(A_k\bar\rho_0 A_{\bar n}+A_{\bar n}\bar\rho_0A_k)+\sum_{k,j \neq \bar n}\xi_k\xi_jA_k\bar\rho_0 A_j}{\operatorname{Tr}(\bar{\rho}_0A_{\bar n})+\sum_{k \neq \bar n}\xi_k^2\operatorname{Tr}( A_k\bar\rho_0)}.  \label{Manifold in xi}
\end{equation}

%\textcolor{blue}{
%\textbf{Remark:} It can be easily verified that, if we turn off the feedback controller, i.e., $u(\xi) \equiv 0$, we will get $(A_i,0) \in\mathcal S_N\times\mathbb R^{2J}$ for $i \in \{0, \ldots, 2J \}$, as the equilibria for the coupled system~\eqref{coupled1}-\eqref{Simplified filter}.}\\

{In this representation, the target state $A_{\bar n}$ corresponds to the value $\xi=0$. The other eigenstates $A_k, k \neq \bar n$ of the measurement operator can be obtained as the limits of $\rho_{\xi}$ as $\xi$ tends to infinity and $\frac{|\xi_k|}{\|\xi\|}$ tends to one.}

The dynamics of the projection filter can now be described in terms of  $\xi$. %, with the additional  advantage of reducing the dimension by one.
To obtain the dynamics of $\xi_k$, it is sufficient to apply It\^o's formula to get\footnote{For the sake of simplicity, from now on we use the convention $\xi_{\bar n}=1.$}
% \[
% d\xi_k=e^\frac{\theta_k}{2} d(e^{-\frac{\theta_{\bar n}}{2}})+e^{-\frac{\theta_{\bar n}}{2}} d (e^\frac{\theta_k}{2})+d (e^{-\frac{\theta_{\bar n}}{2}})d (e^\frac{\theta_k}{2}),
% \]
%so that
\begin{align}
d\xi_k(t)&=- u(\xi(t))\sum_{p\neq k}  \frac{\operatorname{Tr}\left(iJ_y (A_p\bar\rho_0A_k-A_k\bar\rho_0A_p)\right)}{2\operatorname{Tr}(\bar\rho_0A_k)}\xi_p(t) dt \nonumber\\ &+{u(\xi(t))}\sum_{p\neq \bar n} \frac{\operatorname{Tr}\left(iJ_y  (A_p\bar\rho_0A_{\bar n}-A_{\bar n}\bar\rho_0A_p)\right)}{2\operatorname{Tr}(\bar\rho_0A_{\bar n})}\xi_p(t)\xi_k(t) dt \nonumber\\ &
+\eta \left(\frac{3\lambda_{\bar n}^{2}}{2}-\frac{\lambda_{k}^{2}}{2} -\lambda_k \lambda_{\bar n}\right)\xi_k(t) dt +2\eta \left(\lambda_k-\lambda_{\bar n}\right)\operatorname{Tr}(J_z \rho_t)\xi_k(t)dt \nonumber\\ & +\sqrt{\eta}\left(\lambda_k-\lambda_{\bar n}\right)\xi_k(t) dW_t. \label{Simplified filter}
\end{align}
%$\alpha=\left(\lambda_1^2, \ldots, \lambda_m^2\right)^{T}$
%In the preceding section, we explored the process of feedback stabilization for the open quantum system~\eqref{coupled1} by evolving the complete projection filter state over time~\eqref{coupled2}. Notably, the projected filter~\eqref{coupled2} can be considered equivalent to~\eqref{Simplified filter}. Therefore, achieving feedback stabilization of system~\eqref{coupled1} toward the target state only requires the implementation of a control law based on the $m$-stochastic differential equation $\xi$.\\

To address the stabilization problem, we shall from now on work with the coupled system \eqref{coupled1}-\eqref{Simplified filter}, describing the evolution of $(\rho,\xi).$ The %provided 
assumptions on the feedback control $u(\xi)$ given in the previous section can be adapted %for this new coupled equations 
in the current framework
as follows.

\medskip 

{\textbf{A1}$'$:} $u \in \mathcal{C}^1\left(\mathbb{R}^{2J}, \mathbb{R}\right),  u\left(0\right)=0$ and  {
%$\displaystyle \liminf_{\substack{\xi \rightarrow\infty \\ \frac{|\xi_k|}{\|\xi\|} \rightarrow 1}}  |u(\xi)| > 0$ for all $k \in \{0,\ldots,2J\}$.}} \\
$\displaystyle \liminf_{\xi \rightarrow\infty ,\, \frac{|\xi_k|}{\|\xi\|} \rightarrow 1}  |u(\xi)| > 0$ for all $k \in \{0,\ldots,2J\}.$}
{
 %It can be easily verified that, if we turn off the feedback controller, i.e., $u(\xi) \equiv 0$, 
 This ensures that $(A_k,0) \in\mathcal S_N\times\mathbb R^{2J}$ for $k \in \{0, \ldots, 2J \}$, correspond to the equilibria for the coupled system~\eqref{coupled1}-\eqref{Simplified filter}, and that the pair $(\rho,\rho_\xi)$ does not converge to $(A_k,A_h)$ for $h \neq \bar n$.}\\
 
\medskip

%$|u(\xi)| \leq c\left(1-Z_{\bar{n}}(\xi)\right)^p$ %
{\textbf{A2}$'$:} $|u(\xi)| \leq c\max\{\|\xi\|^q,1\}$ for $\xi \in \mathbb{R}^{2J}$ with $q>1$ and for some constant $c>0$.

\medskip

{\textbf{A3}$'$:} $u(\xi)=0$ if $\|\xi\| \leq \epsilon$, for a sufficiently small $\epsilon>0$.

\medskip

{{\textbf{A4}$'$:} For every $\xi\neq 0$ such that $\sum_{k\neq \bar n}(\lambda_{\bar n}-\lambda_k)\xi_k^{2}\operatorname{Tr}( A_k\bar\rho_0) = 0$ it holds
 \begin{equation*}
 2 \eta  \mathscr{V}_z(\rho_\xi) \operatorname{Tr}(\rho_\xi  A_{\bar{n}}) >u(\xi) \Theta_{\bar{n}}(\rho_\xi). 
 \label{Variance function 2}
\end{equation*}}

\medskip

%By reiterating comparable points discussed earlier, we derive the requirements for the simplified feedback controller $u(\xi)$ to accomplish nearly guaranteed exponential stabilization of the interconnected system~\eqref{coupled1}-\eqref{Simplified filter} toward the desired state $\left(\boldsymbol{A}_{\bar{n}}, \boldsymbol{0}\right).$
%Now by repeating similar argeuments as in the previous section, we obtain the conditions that the simplified feedback controller $u(\xi)$ must meet to achieve almost certain exponential stabilization of the coupled system~\eqref{coupled1}-\eqref{Simplified filter} toward the target state $\left(\boldsymbol{A}_{\bar{n}}, \boldsymbol{0}\right).$
%the aim is to determine the conditions that the feedback controller $u(\xi)$ must meet to achieve almost certain exponential stabilization of the coupled system~\eqref{coupled1}-\eqref{Simplified filter} toward the target state $\left(\boldsymbol{A}_{\bar{n}}, \boldsymbol{0}\right).$ 
%We will analyze two cases separately, namely, $\bar{n} \in\{1,m\}$ and $\bar{n} \in\{2, \ldots, m-1\}.$\\
%By repeating the similar arguments as in the previous section, we obtain the following theorems concerning the exponential stabilization of~\eqref{coupled1} by simplified filter.\\
%For the scenario where $\bar{n}$ falls within the range of values $\{1, m\}$, we have the following result.

The following theorem gives an analogue of Theorem~\ref{exponential stability} in the present framework. In order to apply the stability notions in Definition~\ref{stability}, we endow the space $\mathcal{S}_N\times \mathbb{R}^{2J}$ with the distance $d\left((\rho_1,\xi_1),(\rho_2,\xi_2)\right) = d_B(\rho_1,\rho_2)+\|\xi_1-\xi_2\|.$
\begin{theorem}
    Consider the system~\eqref{coupled1}-\eqref{Simplified filter} with
    $\rho_0 \in \operatorname{int}\left(\mathcal{S}_N\right)$ and $\xi_0=\left(1,\ldots,1\right)^T$, {and assume that the solution of the equation~\eqref{Simplified filter} is well-defined on $[0,+\infty)$, that is, that the solution does not blow up in finite time almost surely.}
\begin{itemize}
\item [-] Assume that $\bar n\in\{0,2J\}$ and hypotheses \textbf{A1}$'$ and \textbf{A2}$'$ hold. Then 
the pair $\left({A}_{\bar{n}}, 0\right)\in\mathcal S_N\times\mathbb R^{2J}$  is almost surely exponentially stable with the sample Lyapunov exponent less than or equal to $-\eta.$ 

\item[-] Assume that $\bar n\notin\{0,2J\}$ and hypotheses \textbf{A1}$'$, \textbf{A3}$'$, and \textbf{A4}$'$ hold. Then the target state $\left({A}_{\bar{n}}, 0\right)\in\mathcal S_N\times\mathbb R^{2J}$   is almost surely exponentially stable with the sample Lyapunov exponent less than or equal to $-\frac{\eta}{2}.$ 
\end{itemize}
\label{ simplified exp stabilit}
  %and satisfy the assumptions \textbf{N0'} and \textbf{N1'}. Let $\boldsymbol{A}_{\bar{n}}$ be the target state of $\rho_t$ that belongs to the set $\{A_1,A_{m}\}$. Then, the state $\left(\boldsymbol{A}_{\bar{n}}, \boldsymbol{0}\right)$ is almost surely exponentially stable with a sample Lyapunov exponent less than or equal to $-\eta$. \label{ simplified exp stability 1}
\end{theorem}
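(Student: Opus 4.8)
The feedback law, the target, and the entire stability analysis of Theorem~\ref{exponential stability} are driven only by the scalar quantities $\operatorname{Tr}(\rho A_n)$ and $\operatorname{Tr}(\rho_{\theta}A_n)$, $n=0,\ldots,2J$, through the Lyapunov candidates and the coupled scalar equation~\eqref{Diagonall}. The plan is therefore to show that, in the new variable $\xi$, these same scalar quantities obey the very same dynamics, so that the analysis of Theorem~\ref{exponential stability} transfers essentially verbatim, and then to supplement it with the comparison between $\|\xi\|$ and the Bures distance needed to phrase everything in terms of the distance $d$.

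First I would record the diagonal entries of $\rho_\xi$. Since the $A_k$ are mutually orthogonal rank-one projectors, writing $\check\rho_\xi=\big(\sum_k\xi_k A_k\big)\bar\rho_0\big(\sum_j\xi_j A_j\big)$ (with the convention $\xi_{\bar n}=1$) and using $A_kA_j=\delta_{kj}A_k$, a direct computation from~\eqref{Manifold in xi} yields
\begin{equation*}
\operatorname{Tr}(\rho_\xi A_n)=\frac{\xi_n^2\,\operatorname{Tr}(A_n\bar\rho_0)}{\sum_{k=0}^{2J}\xi_k^2\,\operatorname{Tr}(A_k\bar\rho_0)},\qquad n=0,\ldots,2J.
\end{equation*}
In particular this quantity is insensitive to the signs of the $\xi_k$, it equals $1$ exactly when $\xi=0$ (for $n=\bar n$), and $d_B(\rho_\xi,A_{\bar n})^2=2-2\sqrt{\operatorname{Tr}(\rho_\xi A_{\bar n})}$ is bounded above and below by constant multiples of $\|\xi\|^2$ as $\xi\to 0$, since $\operatorname{Tr}(A_k\bar\rho_0)>0$ for every $k$. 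This bi-Lipschitz equivalence of $\|\xi\|$ and $d_B(\rho_\xi,A_{\bar n})$ near the target is the pivotal analytic point: it guarantees that $\|\xi_t\|\to 0$ is equivalent to $\operatorname{Tr}(\rho_{\xi_t}A_{\bar n})\to 1$, and that the sample Lyapunov exponent measured through $d$ coincides with the one obtained in Theorem~\ref{exponential stability}, namely $-\eta$ for $\bar n\in\{0,2J\}$ and $-\eta/2$ otherwise.

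Next I would apply It\^o's formula to the explicit expression above to check that the scalar processes $\operatorname{Tr}(\rho_{\xi_t}A_n)$ satisfy exactly the coupled equation~\eqref{Diagonall} governing $\operatorname{Tr}(\rho_{\theta_t}A_n)$ in the previous framework (only $\xi_k^2$ enters, so the computation is unaffected by the sign of the coordinates, and the feedback now reads $u(\xi)$). Taking as Lyapunov candidates the functions of Theorem~\ref{exponential stability}, that is $\sqrt{1-\operatorname{Tr}(\rho A_{\bar n})}+\sqrt{1-\operatorname{Tr}(\rho_\xi A_{\bar n})}$ for $\bar n\in\{0,2J\}$ and $\sum_{n\neq\bar n}\big(\sqrt{\operatorname{Tr}(\rho A_n)}+\sqrt{\operatorname{Tr}(\rho_\xi A_n)}\big)$ otherwise, condition~(i) of the scheme underlying Theorem~\ref{General exp} follows from the distance equivalence, while condition~(ii) is the dissipativity estimate already obtained in Theorem~\ref{exponential stability}, since it depends only on~\eqref{Diagonall}; near the target the feedback contribution is negligible because $u\to 0$ (by \textbf{A1}$'$, resp.\ \textbf{A3}$'$). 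The instability of the spurious equilibria, which now correspond to $\|\xi\|\to\infty$ with $|\xi_k|/\|\xi\|\to 1$, is ensured by \textbf{A1}$'$, \textbf{A2}$'$ for $\bar n\in\{0,2J\}$ and by \textbf{A1}$'$, \textbf{A3}$'$ for $\bar n\notin\{0,2J\}$, and the reachability of an arbitrary neighborhood of the target by the remaining hypotheses, with \textbf{A4}$'$ playing on the set $\{\sum_{k\neq\bar n}(\lambda_{\bar n}-\lambda_k)\xi_k^2\operatorname{Tr}(A_k\bar\rho_0)=0\}$ the role that \textbf{A4} played on $\mathbf P_{\bar n}$.

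I expect the main obstacle to be the non-compactness of the $\xi$-space. The spurious equilibria $(A_k,0)$ now sit at infinity rather than at finite points, so their instability and the escape dynamics cannot be read off a local computation and must be handled either by transporting the picture back to $\mathrm{clos}(\mathcal M)$, where these states are genuine boundary points, or by a dedicated analysis of~\eqref{Simplified filter} for large $\|\xi\|$; this is exactly where the assumed absence of finite-time blow-up is indispensable, as it guarantees that $\xi_t$ remains well-defined throughout the argument. A secondary, more routine, difficulty is the bookkeeping in the It\^o computation leading to~\eqref{Diagonall}, in particular verifying that the coupling term through $\operatorname{Tr}(J_z\rho_t)$ reproduces $\mathcal T(\rho,\rho_\xi)$ and that it does not degrade the exponents $-\eta$ and $-\eta/2$.
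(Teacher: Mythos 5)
Your proposal is correct and follows essentially the same route as the paper: the paper's proof likewise reduces everything to Theorem~\ref{exponential stability}, by checking that \textbf{A1}$'$, \textbf{A3}$'$, \textbf{A4}$'$ imply \textbf{A1}, \textbf{A3}, \textbf{A4}, that \textbf{A2}$'$ implies \textbf{A2} via the estimate $1-\operatorname{Tr}(\rho_\xi A_{\bar n})\gtrsim\|\xi\|^2$ for small $\xi$ (a consequence of your diagonal formula), and then transferring the sample Lyapunov exponent through the two-sided bound $\|\xi\|\asymp d_B(\rho_\xi,A_{\bar n})$ near the target, which is exactly your pivotal bi-Lipschitz point. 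Your ``transport back to $\mathrm{clos}(\mathcal{M})$'' option for handling the spurious equilibria at infinity is precisely what the paper does (so no dedicated large-$\|\xi\|$ analysis is required), making your re-derivation via~\eqref{Diagonall} and Theorem~\ref{General exp} a slightly more hands-on implementation of the same reduction.
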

\begin{proof} 
It is easy to verify that the conditions \textbf{A1}$'$, \textbf{A3}$'$ and \textbf{A4}$'$ imply \textbf{A1}, \textbf{A3} and \textbf{A4}, respectively. 
Furthermore, we observe that
\[1-\mathrm{Tr}(\rho_\xi A_{\bar n})=\frac{\sum_{k\neq\bar n} \xi_k^2 \mathrm{Tr}(A_k\bar\rho_0)}{\mathrm{Tr}(A_{\bar n}\bar\rho_0)+\sum_{k\neq\bar n} \xi_k^2 \mathrm{Tr}(A_k\bar\rho_0)}\geq \|\xi\|^2 \frac{\min_{k\neq \bar n}\mathrm{Tr}(A_k\bar\rho_0)}{2\mathrm{Tr}(A_{\bar n}\bar\rho_0)}\]
for $\xi$ small enough. Hence \textbf{A2}$'$ implies \textbf{A2} for $\rho=\rho_\xi$ with $\xi$ small enough, that is, \textbf{A2} is verified in a neighborhood of $A_{\bar n}$ in $\mathcal M$. By using $|u(\xi)| \leq c$ and the boundedness from below of $1-\mathrm{Tr}(\rho A_{\bar{n}})$ outside of that neighborhood, we can conclude the validity of \textbf{A2} for $\rho\in\mathcal M.$ Then by Theorem~\ref{exponential stability} the system is almost surely exponentially stable. Concerning sample Lyapunov exponents, it is enough to observe that in a small enough neighborhood of the target state the following holds
\[\frac12 \min_{k\neq \bar n}\sqrt{\frac{\mathrm{Tr}(A_{k}\bar\rho_0)}{\mathrm{Tr}( A_{\bar n}\bar\rho_0)}}\|\xi\|\leq d_B(\rho_\xi, A_{\bar n}) \leq 2\max_{k\neq \bar n}\sqrt{\frac{\mathrm{Tr}(A_{k}\bar\rho_0)}{\mathrm{Tr}(A_{\bar n}\bar\rho_0)}}\|\xi\|,\]
which implies that 
\[\limsup_{t\to\infty}\frac1{t}\log\|\xi\| = \limsup_{t\to\infty}\frac1{t}\log d_B(\rho_{\xi(t)},A_{\bar n})\]
for every trajectory converging to the equilibrium.
\end{proof}

{Here we give concrete examples of feedbacks which satisfy the previous assumptions and hence stabilize the coupled system~\eqref{coupled1}-\eqref{Simplified filter}  toward the specific target state $\left(A_{\bar{n}}, 0\right)$.  
\begin{Corollary}
Consider the coupled system~\eqref{coupled1}-\eqref{Simplified filter} where the initial condition $\rho_0$ belongs to $\operatorname{int}\left(\mathcal{S}_N\right)$ and $\xi_0=\left(1,\ldots,1\right)^T$. Assume  $\operatorname{Tr}\left(J_y  (A_p\bar\rho_0A_{\bar n}-A_{\bar n}\bar\rho_0A_p)\right)=0$ for $p \neq \bar n$.
\begin{itemize}
\item[-] If $\bar{n} \in\{0, 2J\},$ then the feedback controller 
\begin{equation}
    u_{\bar{n}}(\xi)=\alpha\frac{\|\xi\|^\beta}{1+\|\xi\|^\beta}
    \label{control -1}
\end{equation}
with $\alpha>0$ and $\beta \geq 1$ stabilizes the coupled system toward the state $\left(A_{\bar{n}}, 0\right)$
with a sample Lyapunov exponent no greater than $-\eta$. 
\item[-] If $\bar{n} \in\{1,\ldots, 2J-1\},$ then any feedback controller of the form
\begin{equation}
    u_{\bar{n}}(\xi)=\frac{1}{1+\|\xi\|^{2\beta}}{\Big(\sum_{k\neq \bar n}(\lambda_{\bar n}-\lambda_k)\xi_k^{2}\operatorname{Tr}( A_k\bar\rho_0)\Big)^\beta f\left(\xi\right)}, \label{Control -2}
\end{equation}
with  $\beta \geq 1$ and $f\in\mathcal{C}^1(\mathbb{R}^{2J},\mathbb{R})$ satisfying $f|_{B_\varepsilon (0)}\equiv 0,$ $\|f\|<\infty$, and $ \liminf_{\xi \rightarrow\infty ,\, \frac{|\xi_k|}{\|\xi\|} \rightarrow 1} |f(\xi)| \neq 0,$ stabilizes the coupled system toward the state $\left(A_{\bar{n}}, 0\right).$
The sample Lyapunov exponent is no greater than %$-\eta$ for $\bar{n} \in\{0, 2J\}$ and 
$-\eta/2$.
%for $\bar{n} \notin\{0, 2J\}.$
\end{itemize}
\label{feedback -1}
\end{Corollary} 
Note that under the condition $\operatorname{Tr}\left(J_y  (A_p\bar\rho_0A_{\bar n}-A_{\bar n}\bar\rho_0A_p)\right)=0$ for $p \neq \bar n$, the solution of the equation~\eqref{Simplified filter} is well-defined on $[0,\infty)$ since the equation does not contain quadratic terms. Also, the feedback given in~\eqref{Control -2} vanishes on the domain of $\xi$ considered in  the assumption \textbf{A4}$'$, which is then satisfied. The other required assumptions given in Theorem \ref{ simplified exp stabilit} can be easily verified.
An example of function $f$ satisfying the conditions of the previous corollary is given by 
\begin{equation}
f(\xi)=\max\left(0,\frac{(\|\xi\|-c)^3}{1+\|\xi\|^3}\right),
\label{f}
\end{equation}
for $c>0.$
}

{In the general case where $\operatorname{Tr}\left(J_y  (A_p\bar\rho_0A_{\bar n}-A_{\bar n}\bar\rho_0A_p)\right) \neq 0$ for some $p \neq \bar n$, ensuring that the solution  of  the dynamics~\eqref{Simplified filter} is well-defined in $[0,\infty)$ is not trivial, since the equation contains a quadratic term in  $\xi$. 
In the following lemma, we give a condition on the feedback controller which ensures this property. Let us denote by $\tau_e$ the explosion time.
\begin{lemma}
    Consider the coupled system~\eqref{coupled1}-\eqref{Simplified filter} and set \begin{equation}
        u(\xi)=-\phi(\xi)\sum_{p\neq \bar n} \frac{\operatorname{Tr}\left(iJ_y  (A_p\bar\rho_0A_{\bar n}-A_{\bar n}\bar\rho_0A_p)\right)}{2\operatorname{Tr}(\bar\rho_0A_{\bar n})}\xi_p , \label{feedback law}
        \end{equation} where $\phi(\cdot)$ is a real bounded non-negative function such that $|\phi(\xi)| \leq \frac{\Delta}{\|\xi\|+1}$ with $\Delta>0$. Then every solution of the equation~\eqref{Simplified filter} is well-defined on $[0,+\infty),$ i.e., $\mathbb P(\tau_e<\infty)=0.$
        \label{explosion}
\end{lemma}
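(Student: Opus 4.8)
The plan is to build a quadratically growing Lyapunov function for $\xi$ whose generator grows at most quadratically, and then to invoke the classical Lyapunov non-explosion criterion (Khasminskii). The decisive structural observation is that, once \eqref{feedback law} is inserted into \eqref{Simplified filter}, the only genuinely quadratic nonlinearity becomes sign-definite and points inward. Writing $b_p:=\frac{\operatorname{Tr}(iJ_y(A_p\bar\rho_0A_{\bar n}-A_{\bar n}\bar\rho_0A_p))}{2\operatorname{Tr}(\bar\rho_0A_{\bar n})}$, the feedback reads $u(\xi)=-\phi(\xi)\sum_{p\neq\bar n}b_p\xi_p$, so the second drift line of \eqref{Simplified filter} takes the form $-\phi(\xi)\big(\sum_{p\neq\bar n}b_p\xi_p\big)^2\xi_k$, which carries a favorable sign for $\phi\ge 0$.

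First I would record that the feedback is globally bounded: from $|\phi(\xi)|\le\frac{\Delta}{\|\xi\|+1}$ one gets $|u(\xi)|\le\Delta\|b\|$, where $\|b\|$ collects the constants $b_p$. Consequently the coefficients of \eqref{coupled1} are bounded, $\rho_t$ is globally well-defined on the compact set $\mathcal{S}_N$, and $|\operatorname{Tr}(J_z\rho_t)|\le J$ is a bounded adapted input to \eqref{Simplified filter}. It therefore suffices to rule out explosion of $\xi$ alone, treating $\operatorname{Tr}(J_z\rho_t)$ as a bounded process and granting local existence of solutions up to a possible explosion time $\tau_e$.

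Next I would take $V(\xi)=\|\xi\|^2=\sum_{k\neq\bar n}\xi_k^2$ and estimate $\mathcal{L}V$ term by term via It\^o's formula. The sign-definite term contributes $-2\phi(\xi)\big(\sum_{p\neq\bar n}b_p\xi_p\big)^2\|\xi\|^2\le 0$ and may be discarded. Setting $a_{kp}:=\frac{\operatorname{Tr}(iJ_y(A_p\bar\rho_0A_k-A_k\bar\rho_0A_p))}{2\operatorname{Tr}(\bar\rho_0A_k)}$, the first drift line contributes $2\phi(\xi)\big(\sum_{p\neq\bar n}b_p\xi_p\big)\sum_k\xi_k\sum_{p\neq k}a_{kp}\xi_p$; using the convention $\xi_{\bar n}=1$ the double sum is $O(\|\xi\|+\|\xi\|^2)$ while $|\sum_{p\neq\bar n}b_p\xi_p|\le\|b\|\,\|\xi\|$, so this contribution is bounded by $2\Delta\|b\|\,C\,\frac{\|\xi\|^2+\|\xi\|^3}{\|\xi\|+1}=2\Delta\|b\|\,C\,\|\xi\|^2$. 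This is the crux: the decay $|\phi(\xi)|\le\frac{\Delta}{\|\xi\|+1}$ is tuned precisely so that the cubic growth collapses to quadratic via $\frac{\|\xi\|^2+\|\xi\|^3}{\|\xi\|+1}=\|\xi\|^2$. The residual linear drifts $\eta\big(\tfrac32\lambda_{\bar n}^2-\tfrac12\lambda_k^2-\lambda_k\lambda_{\bar n}\big)\xi_k$ and $2\eta(\lambda_k-\lambda_{\bar n})\operatorname{Tr}(J_z\rho_t)\xi_k$, together with the It\^o correction $\eta\sum_k(\lambda_k-\lambda_{\bar n})^2\xi_k^2$, are all bounded by $C\|\xi\|^2$ thanks to $|\operatorname{Tr}(J_z\rho_t)|\le J$. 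Collecting everything yields $\mathcal{L}V(\xi)\le K(1+V(\xi))$ for some constant $K>0$.

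Finally I would conclude with the standard stopping-time argument: with $\tau_n:=\inf\{t:\|\xi_t\|\ge n\}$, applying It\^o's formula to $V(\xi_{t\wedge\tau_n})$, taking expectations (the stopped martingale part vanishes), and using Gr\"onwall's lemma gives $\mathbb{E}[V(\xi_{t\wedge\tau_n})]\le(1+V(\xi_0))e^{Kt}$. Since $V(\xi_{\tau_n})=n^2$ on $\{\tau_n\le t\}$, this yields $\mathbb{P}(\tau_n\le t)\le(1+V(\xi_0))e^{Kt}/n^2$, which tends to $0$ as $n\to\infty$, so $\mathbb{P}(\tau_e\le t)=0$ for every $t$ and hence $\mathbb{P}(\tau_e<\infty)=0$. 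I expect the only real difficulty to lie in the bookkeeping of the third step, namely matching the decay of $\phi$ against the polynomial degree of the drift so that $\mathcal{L}V$ stays quadratic; the inward sign of the quadratic feedback term and the exact cancellation identity $\frac{\|\xi\|^2+\|\xi\|^3}{\|\xi\|+1}=\|\xi\|^2$ are what make the estimate close.
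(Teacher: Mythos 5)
Your proof is correct and follows essentially the same route as the paper's: both compute the generator of $\|\xi\|^2$, discard the quadratic feedback term because $\phi\ge 0$ makes it sign-definite and inward-pointing, bound the remaining terms linearly in $\|\xi\|^2$ using the boundedness of the feedback induced by the decay of $\phi$, and conclude non-explosion by a stopping-time argument. The only cosmetic difference is in the final step, where the paper uses the discounted process $e^{-ct}\|\xi\|^2$ and a supermartingale-type estimate while you use Gr\"onwall's lemma plus a Markov-type inequality; these are interchangeable forms of the standard Khasminskii criterion.
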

\begin{proof}
We consider the dynamics of $\|\xi\|^2=\sum_{k\neq \bar n} \xi_k^2$. We have
\begin{align}
d\xi_k^2(t)&=-2\xi_k(t) u(\xi(t))\sum_{p\neq k}  \frac{\operatorname{Tr}\left(iJ_y (A_p\bar\rho_0A_k-A_k\bar\rho_0A_p)\right)}{2\operatorname{Tr}(\bar\rho_0A_k)}\xi_p(t) dt \nonumber\\ &+2\xi_k^2(t){u(\xi(t))}\sum_{p\neq \bar n} \frac{\operatorname{Tr}\left(iJ_y  (A_p\bar\rho_0A_{\bar n}-A_{\bar n}\bar\rho_0A_p)\right)}{2\operatorname{Tr}(\bar\rho_0A_{\bar n})}\xi_p(t) dt \nonumber\\ &
+2\eta \left(\frac{3\lambda_{\bar n}^{2}}{2}-\frac{\lambda_{k}^{2}}{2} -\lambda_k \lambda_{\bar n}\right)\xi_k^2(t) dt+4\eta \left(\lambda_k-\lambda_{\bar n}\right)\operatorname{Tr}(J_z \rho_t)\xi_k^2(t) dt \nonumber\\ & + \eta  \left(\lambda_k-\lambda_{\bar n}\right)^2 \xi_k^2(t) dt +2\sqrt{\eta}\left(\lambda_k-\lambda_{\bar n}\right)\xi_k^2(t) dW_t. \label{Simplified filter correction}
\end{align}
Since the feedback~\eqref{feedback law}  is bounded, all the terms in the right hand side of~\eqref{Simplified filter correction} are sublinear in $\|\xi\|^2,$ except for the second term 
which  is negative. As a consequence, the infinitesimal generator of $\|\xi\|^2$ satisfies $\mathscr{L}\left(\|\xi\|^2 \right)\leq c \|\xi\|^2$ with $c>0.$

Now, define the stopping time $\tau_\varepsilon:=\inf \left\{t \geq 0 \mid \|\xi(t)\|^2 \in [\frac{1}{\varepsilon}, \infty) \right\}$. %Let $T$ be sufficiently large such that $\mathbb P(\tau_e\leq T)>0$. 
By It\^o's formula, for every $T>0$ we get 
\begin{equation}
\mathbb E(e^{-c (T \wedge \tau_\varepsilon)}\|\xi(T \wedge \tau_\varepsilon)\|^2)= \|\xi(0)\| ^2 + \mathbb E \left(\int_0^{T\wedge \tau_{\varepsilon} } \mathscr{L} (e^{-cs}\|\xi(s)\|^2) d s\right) \leq \|\xi(0)\| ^2,
\label{ito}
\end{equation} 
the last inequality coming from the fact that $\mathscr{L}\left(e^{-ct}\|\xi\|^2 \right)\leq 0.$
%Since $\tau_e \geq \tau_\epsilon$, conditioning to the event $\tau_e \leq T$,

Now we note that by the definitions of $\tau_e$ and $\tau_\varepsilon,$ we have $\tau_\varepsilon\leq\tau_e.$ We have the following bound

\begin{align*}
\mathbb E\left(e^{-cT} \varepsilon^{-1} \mathbf{1}_{\{\tau_e \leq T \}}\right) &\leq  \mathbb E\left(e^{-c\tau_\varepsilon} \|\xi(\tau_\varepsilon)\| ^2 \mathbf{1}_{\{\tau_e \leq T \}}\right)\\
&=\mathbb E\left(e^{-c(T\wedge\tau_\varepsilon)} \|\xi(T\wedge\tau_\varepsilon)\| ^2 \mathbf{1}_{\{\tau_e \leq T \}}\right)\\
&\leq\|\xi(0)\| ^2,
\end{align*}
where for the last inequality we used the inequality \eqref{ito}.

Thus, $\mathbb{E}\left( \mathbf{1}_{\{\tau_e \leq T\}}\right)=\mathbb P(\tau_e \leq T)\leq \varepsilon e^{cT} \|\xi(0)\| ^2.$
Now it is sufficient to let $\varepsilon$ tends to zero, to conclude that 
%\lim_{\epsilon \to 0} 
$\mathbb P(\tau_e \leq T)=0.$ The proof is complete since $T$ is chosen arbitrarily. 
%Now,???? in order to get the solution of the equation~\eqref{Simplified filter} well-defined on $[0,+\infty)$, it is sufficient to take our control law as the following :
%\begin{equation*}
%u(\xi)=-\sum_{p\neq \bar n}^{2J} \frac{\operatorname{Tr}\left(iJ_y  (A_p\bar\rho_0A_{\bar n}-A_{\bar n}\bar\rho_0A_p)\right)}{2\operatorname{Tr}(\bar\rho_0A_{\bar n})}\xi_p \phi(\xi).
%\end{equation*}
\end{proof} 
Based on the results discussed above, below we provide further examples of feedbacks  stabilizing the coupled system~\eqref{coupled1}-\eqref{Simplified filter}  toward the specific target state $\left(A_{\bar{n}}, 0\right)$. 
\begin{Corollary}
Consider the coupled system~\eqref{coupled1}-\eqref{Simplified filter} where the initial condition $\rho_0$ belongs to $\operatorname{int}\left(\mathcal{S}_N\right)$ and $\xi_0=\left(1,\ldots,1\right)^T$. Assume $\operatorname{Tr}\left(J_y  (A_p\bar\rho_0A_{\bar n}-A_{\bar n}\bar\rho_0A_p)\right) \neq 0$ for every $p \neq \bar n$.
\begin{itemize}
\item[-] If $\bar{n} \in\{0, 2J\},$ then the feedback controller 
\begin{equation}
    u_{\bar{n}}(\xi)=-\alpha\frac{\|\xi\|^\beta}{1+\|\xi\|^{\beta +1}}\sum_{p\neq \bar n} \frac{\operatorname{Tr}\left(iJ_y  (A_p\bar\rho_0A_{\bar n}-A_{\bar n}\bar\rho_0A_p)\right)}{2\operatorname{Tr}(\bar\rho_0A_{\bar n})}\,\xi_p
    \label{control 1}
\end{equation}
with $\alpha>0$ and $\beta \geq 1$ stabilizes the coupled system toward the state $\left(A_{\bar{n}}, 0\right)$
with a sample Lyapunov exponent no greater than $-\eta$. 
\item[-] If $\bar{n} \in\{1,\ldots, 2J-1\},$ then any feedback controller of the form
%\begin{equation}
%    u_{\bar{n}}(\xi)=\frac{1}{1+\|\xi\|^{2\beta}}{\left(\sum_{k\neq \bar n}^{m}(\lambda_{\bar n}-\lambda_k)\xi_k^{2}\operatorname{Tr}( A_k\bar\rho_0)\right)^\beta f\left(\xi\right)}, \label{Control 22}
%\end{equation}
\begin{equation}
     u_{\bar{n}}(\xi)=-\frac{g(\xi)}{1+\|\xi\|^{2\beta + 1}} \sum_{p\neq \bar n} \frac{\operatorname{Tr}\left(iJ_y  (A_p\bar\rho_0A_{\bar n}-A_{\bar n}\bar\rho_0A_p)\right)}{2\operatorname{Tr}(\bar\rho_0A_{\bar n})}\,\xi_p,
     \label{Control 2}
\end{equation}
where  $$g(\xi)=\Big(\sum_{k\neq \bar n}(\lambda_{\bar n}-\lambda_k)\xi_k^{2}\operatorname{Tr}( A_k\bar\rho_0)\Big)^{\beta}\!f(\xi)$$ with $\beta \geq 1$ and $f\in\mathcal{C}^1(\mathbb{R}^{2J},\mathbb{R})$ satisfying $f|_{B_\varepsilon (0)}\equiv 0$, $\|f\|<\infty$, and { $ \liminf_{\xi \rightarrow\infty ,\, \frac{|\xi_k|}{\|\xi\|} \rightarrow 1} |f(\xi)| \neq 0$,}  stabilizes the coupled system toward the state $\left(A_{\bar{n}}, 0\right).$ The sample Lyapunov exponent is no greater than
%$-\eta$ for $\bar{n} \in\{0, 2J\}$ and 
$-\eta/2$.
%for $\bar{n} \notin\{0, 2J\}.$
\label{feedback 1}
\end{itemize}
\end{Corollary} 
Here the feedbacks are in the form given in Lemma \ref{explosion} and satisfy the other required assumptions given in Theorem \ref{ simplified exp stabilit}.
Similar as before, the function $f$ can be chosen as in~\eqref{f}.}
\section{Simulations}\label{E}
In this section, we test our previous results through numerical simulations of a four-level quantum angular momentum system. Here $N=4$ and $J=\frac{3}{2}.$  A feedback control satisfying the conditions of Theorem \ref{exponential stability} is given by  \begin{equation}
u({\rho}_\theta)=\alpha\left(1-\operatorname{Tr}\left({\rho}_\theta  A_{0}\right)\right)^\beta ,\label{control 3}
\end{equation}
%$- \gamma Tr(i[J_y,\tilde{\rho}] \boldsymbol \rho_{0})$
where ${\rho}_\theta$ is the projection filter and ${A_0}$ is the target state.  By the spectral decomposition, the measurement operator can be written as $L=\sum_{j=0}^3\lambda_jA_j$, where $\lambda_0=\frac{3}{2}$, $ \lambda_1=\frac{1}{2}$, $ \lambda_2=-\frac{1}{2}$ and $ \lambda_3=-\frac{3}{2}$ are the distinct eigenvalues of $L$, corresponding to the four projection operators given by $A_0=\mathrm{diag}(1,0,0,0)$, $A_1=\mathrm{diag}(0,1,0,0)$, $A_2=\mathrm{diag}(0,0,1,0),$ and $A_3=\mathrm{diag}(0,0,0,1)$ respectively.

We set the detector efficiency $\eta=0.5$. %and $ \gamma=10$.
Simulations have been done in the interval $[0,5]$, with step size $\delta=5\times 2^{-12}$. The initial states have been chosen as follows
\begin{align*}
\rho_0&=\begin{pmatrix}
    0.2 & 0 & 0 & 0 \\ 0 & 0.2 & 0 & 0 \\ 0 & 0 & 0.3 & 0 \\ 0 & 0 & 0 & 0.3
\end{pmatrix},
&
\bar{\rho}_0&=\begin{pmatrix}
0.2 & 0 & 0 & 0.1i \\ 0 & 0.2 & -0.1i & 0 \\ 0 & 0.1i & 0.3 & 0 \\ -0.1i & 0 & 0 & 0.3
\end{pmatrix}.
\end{align*} 
%Firstly we study the approximation error using the Frobenius norm, between the quantum filter and the quantum projection filter in presence of a feedback control~\eqref{control 3} as shown in Figure \ref{error}. Then, 
Figure \ref{fidelityevolution} shows the fidelity between the quantum filter and the quantum projection filter, the fidelity between the quantum filter and the target state $A_0,$ and 
 the fidelity between the quantum projection filter and the target state $A_0.$

%are stabilized to the same target state $A_0$, i.e $(\rho,{\rho}_{\theta})$ is stabilized to the target state $(A_0,A_0)$, under the feedback control $u({\rho}_{\theta})$. Also, using the fidelity, and under the feedback law~\eqref{control 3}, we study the convergence of the quantum filter to the target state $A_0$ and the divergence of $A_1$, $A_2$ and $A_3$, as shown in Figure~\ref{Convergence}. 

 Regarding the method presented in Section \ref{D}, we apply the feedback laws proposed in Corollary \ref{feedback 1}. If we consider the filter equation in~\eqref{belavkin}, we should solve a stochastic differential equation in dimension $15.$ On the other hand, when using the presented projection method, we deal with a   stochastic differential equation in dimension $3.$ %That is, the dimension of the manifold is chosen to be $m = 4$.  
 % We note that $d Y_t=d W_t+2\sqrt{\eta}\operatorname{Tr}\left(J_z\rho_t\right) d t$ is used to drive the exponential quantum projection filter.
 Here we take $\rho_0,\bar\rho_0$ as before and we set $\xi_0=\left(1,1,1\right)^T.$

% Firstly we study the approximation error using the Frobenius norm, between the quantum filter and the quantum projection filter in presence of a feedback control~\eqref{control 3} as shown in Figure \ref{error}. Then, in figure \ref{fidelity evolution}, we show by the fidelity that the quantum filter and the quantum projection filter are stabilized to the same target state $\boldsymbol{A}_1$, i.e $(\rho,\tilde{\rho}_{\theta})$ is stabilized to the target state ($\boldsymbol{A_1},\boldsymbol{A_1}$), under the feedback control $u(\tilde{\rho}_{\theta})$. Also, using the fidelity, and under the feedback law~\eqref{control 3}, we study the convergence of the quantum filter to the target state $\boldsymbol{A}_1$ and the divergence of $\boldsymbol{A}_2$, $\boldsymbol{A}_3$ and $\boldsymbol{A}_4$, as shown in figure \ref{Convergence}. 
In Figures \ref{XiforA1} and \ref{XiforA2}, we show the convergence of the quantum filter $\rho$ to the target states $A_0$ and $A_1$ by applying the feedback laws \ref{control 1} and \ref{Control 2} respectively.
% \begin{figure}
% \centerline{\includegraphics[width=5in]{}}
% \caption{Approximation error between the quantum filter and the quantum projection filter in presence of a feedback control based on the projection
% filter~\eqref{control 3}.}
% \label{error}
% \end{figure}
\begin{figure}
\centerline{\includegraphics[width=5in]{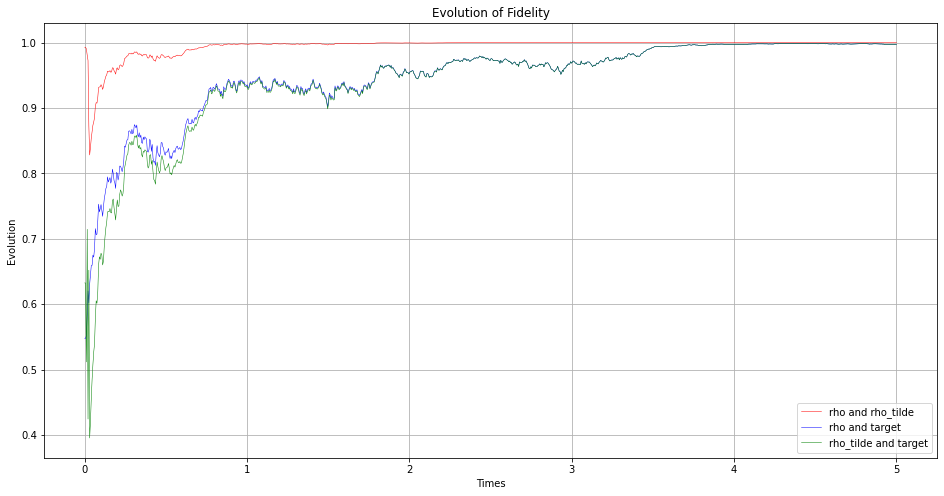}}
\caption{Stabilization of the coupled four-level quantum angular momentum system toward $(A_0,A_0)$ by the feedback control~\eqref{control 3} with $\alpha=10$ and $\beta=5$. }
\label{fidelityevolution}
\end{figure}
% \begin{figure}
% \centerline{\includegraphics[width=5in]{}}
% \caption{Convergence of the quantum filter to the target state $A_0$, and divergence to the other equilibrium states $A_1$, $A_2$, and $A_3$, by applying a feedback control based on the projection filter~\eqref{control 3}.}
% \label{Convergence}
% \end{figure}
\begin{figure}
\centerline{\includegraphics[width=5in]{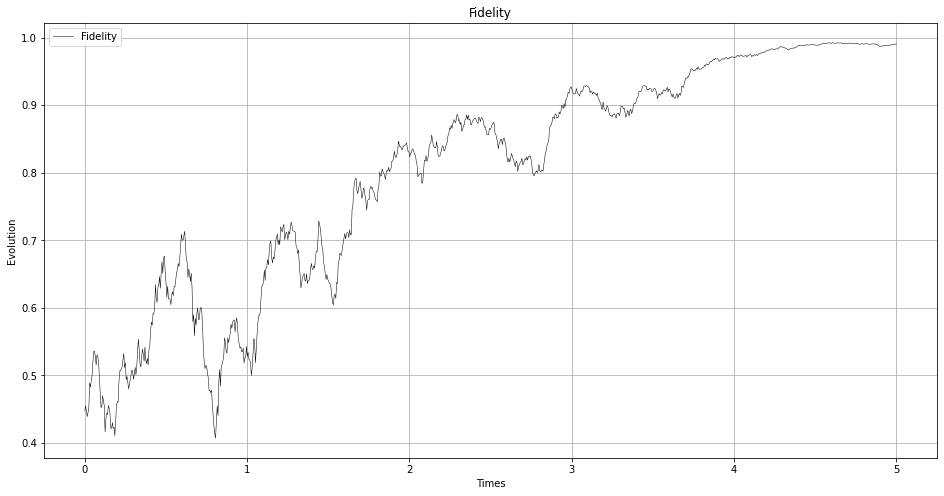}}
\caption{Convergence of the quantum filter~\eqref{coupled1} to the target state $A_0$, by applying the feedback control~\eqref{control 1} with $\alpha=2$ and $ \beta=2$.}
\label{XiforA1}
\end{figure}
\begin{figure}
\centerline{\includegraphics[width=5in]{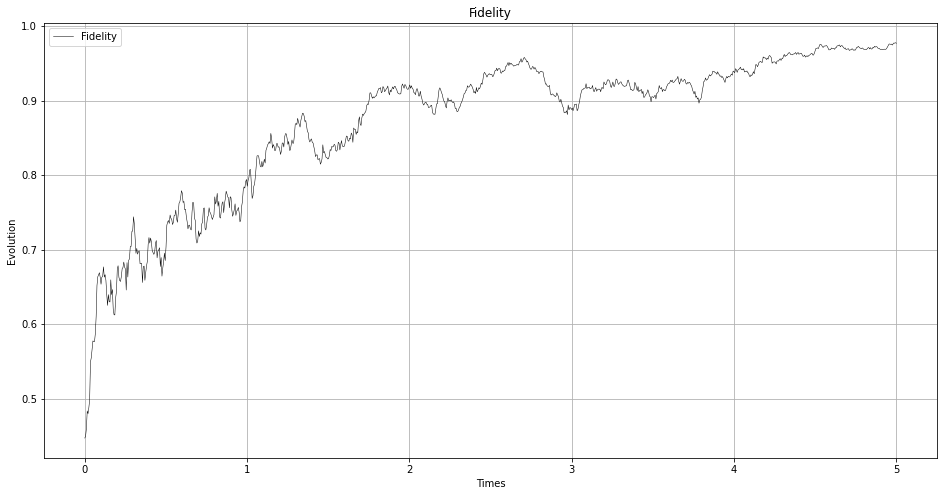}}
\caption{Convergence of the quantum filter~\eqref{coupled1} to the target state $A_1$, by applying the feedback control~\eqref{Control 2} with $\beta=2$ and $ f(\xi)=\max\left(0,\frac{(\|\xi\|-2)^3}{1+\|\xi\|^3}\right)$.}
\label{XiforA2}
\end{figure}

\section{Conclusion}\label{F}
In this paper we show that the projection filter can be applied in the feedback stabilization. To this aim, we demonstrate that the analysis in \cite{liang2021robust} can be adapted to our framework. Finally, we propose a new technique of feedback stabilization based on a new parameterization of the exponential family. This method reduces significantly the computational complexity of the real-time implementation of the feedback strategy.

Future research directions shall investigate alternative methods for simplifying the complexity of quantum filters, extend the scope of feedback stabilization based on approximate filters to broader contexts, and explore the application of projection filters in error correction.

\appendix
%\section{Appendix}
%{\subsection*{A. Orthogonal Projection}}
\section{Orthogonal projection}
\label{Orthogonal projection}
The set $\mathcal{M}$ defined in equation~\eqref{manifold} is, locally, a differential manifold {of dimension $m$} if the derivatives 
$$
\check{\partial}_k:=\frac{\partial \check{\rho}_\theta}{\partial \theta_k}
$$
form a linearly independent subset of the space $\mathcal{A}$  of Hermitian operators on $\mathbb C^N$. This can be easily verified if the operators $A_j$ form a family of mutually commuting projectors. We borrow from  the theory of quantum information geometry~\cite{Amari} some tools allowing to define a projection from  $\mathcal{A}$  onto the tangent space of $\check{\mathcal{M}}$ at the point  $\check{\rho}_\theta,$ denoted by $T_{\check{\rho}_\theta}\check{\mathcal{M}}$. Any vector in $T_{\check{\rho}_\theta}\check{\mathcal{M}}$ can be naturally identified with an element of $\mathcal{A}$, called the mixture representation or $m$-representation of the tangent vector. With this identification the elements $\check{\partial}_j$ form a basis of the tangent space $T_{\check{\rho}_\theta}\check{\mathcal{M}}$. The $m$-representation of a vector $X$ is indicated as $X^{(m)}$.

%{\bf{A. Orthogonal Projection}}
%The starting point is the geometric structure of the differential manifold $\check{\mathcal{M}}$ presented in equation~\eqref{manifold}, which is based on quantum information geometry theory as described in \cite{Amari}. To refer to the tangent vector space of the manifold $\check{\mathcal{M}}$ at the point $\check{\rho}_\theta$, we use the notation $T_{\check{\rho}_\theta}\check{\mathcal{M}}$.  This can be identified as a subset of the space $\mathcal{A}$ of Hermitian operators on $\mathbb C^N.$ Through a direct computation, it is possible to obtain the mixture representation (also called the $m$-representation) of the natural basis vector of $T_{\check{\rho}_\theta}\check{\mathcal{M}}$ as
% $$
% \check{\partial}_j:=\frac{\partial \check{\rho}_\theta}{\partial \theta_j}
% $$
% Suppose that $\left\{\check{\partial}_j\right\}$ are linearly independent. Then
% $$
% T_{\bar{\rho}_\theta}\check{\mathcal{M}}=\operatorname{Span}\left\{\check{\partial}_j\right\}
% $$
We will endow  $\check{\mathcal{M}}$ with a Riemannian metric. For this purpose we define the inner product on $\mathcal{A}$
% , we use a symmetrized inner product to define the inner product $\left\{\ll, \gg _{\check{\rho}_\theta}, \check{\rho}_\theta \in \check{\mathcal{M}}\right\}$ on $T_{\check{\rho}_\theta}\check{\mathcal{M}}:$
% Now, we endow $\check{\mathcal{M}}$ with a Riemannian metric. The symmetrized inner product is employed to define the inner product $\left\{\ll, \gg _{\check{\rho}_\theta}, \check{\rho}_\theta \in \check{\mathcal{M}}\right\}$ on $T_{\check{\rho}_\theta}\check{\mathcal{M}}:$
$$
\langle\langle A, B\rangle\rangle_{\check{\rho}_\theta}=\frac{1}{2}{\rm Tr}(\check{\rho}_\theta AB + \check{\rho}_\theta BA),\qquad  \forall A,B \in \mathcal{A} .
$$
In light of the aforementioned inner product, we can define an additional representation of a tangent vector $X \in T_{\check{\rho}_\theta}\check{\mathcal{M}}$ called the $e$-representation. This representation corresponds to a Hermitian operator denoted as $X^{(e)}$ uniquely defined by the relation
$$
\langle\langle X^{(e)}, A\rangle\rangle_{\check{\rho}_\theta}=\operatorname{Tr}\left(X^{(m)} A\right), \qquad \forall A \in \mathcal{A}.
$$
This implies
\begin{equation*}
\label{eq:emrepr}
X^{(m)} = \frac{1}{2}(\check{\rho}_\theta X^{(e)} +X^{(e)} \check{\rho}_\theta),\qquad \forall X \in T_{\check{\rho}_\theta}\check{\mathcal{M}},
\end{equation*}
from which it is easy to obtain $\check{\partial}_j^{(e)}=A_j$. Thanks to the $e$-representation, we introduce an inner product $\langle$,$\rangle$ on $T_{\check{\rho}_\theta}\check{\mathcal{M}}$ by
$$
\begin{aligned}
\langle X, Y\rangle_{\check{\rho}_\theta} & =\ll X^{(e)}, Y^{(e)} \gg_{\check{\rho}_\theta} \\
& =\operatorname{Tr}\left(X^{(m)} Y^{(e)}\right), \qquad \forall X, Y \in T_{\check{\rho}_\theta}\check{\mathcal{M}},
\end{aligned}
$$
which is a quantum version of the Fisher metric \cite{petz2011introduction}. Every element of the quantum Fisher metric is expressed as a real function of $\theta$
$$
g_{k j}(\theta)=\left\langle\check{\partial}_k, \check{\partial}_j\right\rangle_{\check{\rho}_\theta}=\operatorname{Tr}\left(\check{\rho}_\theta A_k A_j\right).
$$
The quantum Fisher information matrix is a real matrix of dimensions $m \times m$ and can be expressed as $G(\theta)=\left(g_{k j}(\theta)\right)$. %The tangent space of the higher dimensional differential manifold $\mathcal{Q} = \{\rho \in \mathcal{A} \mid \rho \geq 0\}$ at each point $\check{\rho}_\theta$ is denoted by $T_{\check{\rho}_\theta}\mathcal{Q}$. 
% $$
% \mathcal{A}=\left\{A \mid A=A^{\dagger}\right\}
% $$
A projection operation that is orthogonal, known as $\Pi_{\theta}$, can be defined to map vectors in $\mathcal{A}$ to vectors in $T_{\check{\rho}_\theta}\check{\mathcal{M}}$ in the following manner:
\begin{equation}
\begin{aligned}
\Pi_{\theta}: \mathcal{A} & \longrightarrow T_{\check{\rho}_\theta}\check{\mathcal{M}} \\
\nu & \longmapsto \sum_{k=1}^{m} \sum_{j=1}^m g^{k j}(\theta)\langle\nu, \check{\partial}_{j}^{(e)}\rangle_{\check{\rho}_\theta} \check{\partial}_k, \label{projection operation}
\end{aligned}
\end{equation}
where the matrix $\left(g^{k j}(\theta)\right)$ refers to the inverse of the quantum information matrix $G(\theta)$.\\
% The operation $\Pi_{\theta}$ described in equation~\eqref{projection operation} can be used to find the closest approximation of any vector in $\mathcal{A}$ within $T_{\check{\rho}_\theta}\check{\mathcal{M}}$. This operation is one of two essential technical tools used to create the projected filter.
\section{Some basic tools for stochastic processes and stability}\label{app:basic}
% {\subsection*{B. Some basic tools for stochastic processes and stability}}
Here we recall some basic elements related to definitions of infinitesimal generator, Stratonovich equation, and stochastic stability. %Following that, we introduce certain concepts related to stochastic stability. Lastly, we articulate the support theorem, which will be employed extensively in this paper.
\subsection*{Infinitesimal generator and It\^{o} formula}
%{\textbf{Infinitesimal generator and It\^{o} formula}.} 
Consider a stochastic differential equation $d q_t=$ $f\left(q_t\right) d t+g\left(q_t\right) d W_t$, where $q_t$ takes  values in $Q \subset \mathbb{R}^p$.  The corresponding infinitesimal generator $\mathscr{L}$ operating on { twice continuously differentiable function in space and continuously differentiable function in time such that}  $V: Q \times \mathbb{R}_{+} \rightarrow \mathbb{R}$ is defined as follows
$$
\mathscr{L} V(q, t):=\frac{\partial V(q, t)}{\partial t}+\sum_{j=1}^p \frac{\partial V(q, t)}{\partial q_j} f_j(q)+\frac{1}{2} \sum_{j, k=1}^p \frac{\partial^2 V(q, t)}{\partial q_j \partial q_k} g_j(q) g_k(q) .
$$
It\^{o} formula gives
$
d V(q, t)=\mathscr{L} V(q, t) d t+\sum_{j=1}^p \frac{\partial V(q, t)}{\partial q_j} g_j(q) d W_t.
$
%In the forthcoming discussion, the operator $\mathscr{L}$ is associated with Equations~\eqref{coupled1}-\eqref{coupled2}.
\subsection*{Stratonovich equation} Any stochastic differential equation in It\^{o} form in $\mathbb{R}^p$
%$$
%d x_t=\widetilde{X}_0\left(x_t\right) d t+\sum_{k=1}^n \widetilde{X}_k\left(x_t\right) d W_t^k, \quad x_0=x,
%$$
\[d q_t = \tilde{f}\left(q_t\right) d t+%\sum_{k=1}^n X_k
\tilde{g}\left(q_t\right)  d W_t, \quad q_0=q,\]
can be written in the following Stratonovich form 
$$
d q_t = f\left(q_t\right) d t+%\sum_{k=1}^n X_k
g\left(q_t\right) \circ d W_t, \quad q_0=q,
$$
where %$X_0(x)=\widetilde{X}_0(x)-\frac{1}{2} \sum_{l=1}^K \sum_{k=1}^n \frac{\partial \tilde{X}_k}{\partial x_l}(x)\left(\tilde{X}_k\right)_l(x),\left(\widetilde{X}_k\right)_l$ 
$f(q)=\widetilde{f}(q)-\frac{1}{2} \sum_{l=1}^p  \frac{\partial \tilde{g}}{\partial x_l}(x)\left(\tilde{g}\right)_l(x),\left(\widetilde{g}\right)_l$ 
denoting the $l$-th component  of the vector $\widetilde{g}$, and $g(q)= \widetilde{g}(q)$.
%$X_k(x)=\widetilde{X}_k(x)$ for $k \neq 0$.
%The solutions of a stochastic differential equation can be related to those of a corresponding deterministic equation, as stated in the following well-known theorem.\\
\subsection*{Stochastic stability}
%{\textbf{Stochastic stability}.} 
In this section we recall some stability notions for stochastic dynamics. Consider a stochastic process $q_t$
solution of the stochastic differential equation
\[
d q_t = f\left(q_t\right) d t+%\sum_{k=1}^n X_k
g\left(q_t\right)  d W_t, \quad q_0=q,
\]
%\[dX_t = \varphi(X_t)dt + \psi(X_t)dW_t\]
where $q_t$ takes values on a manifold $\mathcal{N}$ endowed with a metric structure $d$. We assume that the classical existence and uniqueness conditions for the solution are satisfied. %We also assume that the initial condition $X_0$ is deterministic and equal to a point $x_0\in\mathcal{N}$ and 
We say that $\hat q$ is an equilibrium of the system if $f(\hat q)=g(\hat q)=0$.
{ %Let us denote by $B_r(E)$ the set of points whose distance from $E$ is less than or equal to $r$  with respect to the metric $d.$ 
We have the following definition.
\begin{definition}[see \cite{Kham}] 
\label{stability}
Let $E$ be a set of  equilibria of the system. Then we say that
%$\mathbb{P}$ probability measure ....
%To explain the equilibrium of the coupled system~\eqref{coupled1}-\eqref{coupled2}, let $(\boldsymbol{\rho}, \tilde{\boldsymbol{\rho}})$ be such an equilibrium. If the following conditions hold, then $(\boldsymbol{\rho}, \tilde{\boldsymbol{\rho}})$ is considered:\\
\begin{itemize}
\item
%{\textbf{1.}} 
$E$ is locally stable in probability, if for every $\varepsilon \in(0,1)$ and for every $r>0$, there exists $\delta=\delta(\varepsilon, r)$ such that
% $$
% \mathbb{P}\left(\left(\rho_t, \tilde{\rho}_t\right) \in \mathbf{B}_r(\boldsymbol{\rho}, \tilde{\boldsymbol{\rho}}) \text { for } t \geq 0\right) \geq 1-\varepsilon,
% $$
\[\mathbb{P}(d(q_t,E)\leq r\mbox{ for every }t\geq 0)\geq 1-\varepsilon\]
whenever $d(q_0,E)\leq \delta,$
%$\left(\rho_0, \tilde{\rho}_0\right) \in \mathbf{B}_\delta(\boldsymbol{\rho}, \tilde{\boldsymbol{\rho}})$.
%{\textbf{2.}}
\item  $E$ is almost surely asymptotically stable in $\Gamma$, where $\Gamma \subset \mathcal{N}$ is a.s. invariant, if it is locally stable in probability and 
\[\mathbb{P}(\lim_{t\to\infty} d(q_t,E)=0)=1\]
% $$\mathbb{P}\left(\lim _{t \rightarrow \infty} \mathbf{d}_B((\rho, \tilde{\rho}),(\boldsymbol{\rho}, \tilde{\boldsymbol{\rho}}))=0\right)=1,$$ 
whenever
$q_0\in\Gamma,$
%$\left(\rho_0, \tilde{\rho}_0\right) \in \Gamma$.
%{\textbf{3.}}
\item $E$ is almost surely exponentially stable in $\Gamma$, where $\Gamma \subset \mathcal{N}$ is a.s. invariant, if 
%for $\left(\rho_0, \tilde{\rho}_0\right) \in \Gamma$, 
\[\limsup _{t \rightarrow \infty} \frac{1}{t} \log d(q_t,E)<0\]
% $$ \limsup _{t \rightarrow \infty} \frac{1}{t} \log \mathbf{d}_B((\rho, \tilde{\rho}),(\boldsymbol{\rho}, \tilde{\boldsymbol{\rho}}))<0,$$ 
almost surely whenever $q_0\in\Gamma.$ The left-hand side of the inequality is referred to as the sample Lyapunov exponent of the solution,

\item $E$ is exponentially stable in mean in $\Gamma$, where $\Gamma \subset \mathcal{N}$ is a.s. invariant, if \[ \mathbb E (d(q_t,E)) \leq \alpha d(q_0,E)e^{- \beta t}\]
for some positive constants $\alpha$ and $\beta$ whenever $q_0 \in \Gamma$. The smallest value $- \beta$ for which the above inequality is satisfied is called the average Lyapunov exponent.
\end{itemize}
\end{definition}
}

% \begin{theorem*}[Support Theorem] Consider the Stratonovich equation $$
% d x_t=X_0\left(t, x_t\right) d t+\sum_{k=1}^n X_k\left(t, x_t\right) \circ d W_t^k, 
% $$ with bounded measurable function $X_0(t,x)$ that is uniformly Lipschitz continuous in $x$, and $X_k(t,x)$ that are twice continuously differentiable in $x$ and continuously differentiable in $t$, with bounded derivatives for $k \neq 0$. Let $x_0=x$ and $\mathbb{P}_x$ denote the probability law of the solution $x_t$ starting at $x$. Also, consider a deterministic control system $$
% \frac{d}{d t} x_v(t)=X_0\left(t, x_v(t)\right)+\sum_{k=1}^n X_k\left(t, x_v(t)\right) v^k(t),
% $$ with $v^k \in \mathcal{V}$, where $\mathcal{V}$ is the set of all locally bounded measurable functions from $\mathbb{R}{+}$ to $\mathbb{R}$, and define $\mathcal{W}x$ as the set of all continuous paths from $\mathbb{R}{+}$ to $\mathbb{R}^K$ starting at $x$, equipped with the topology of uniform convergence on compact sets. The smallest closed subset of $\mathcal{W}_x$ such that $\mathbb{P}_x\left(x . \in \mathcal{I}_x\right)=1$ is $\mathcal{I}_x=\overline{\left\{x_v(\cdot) \in \mathcal{W}_x \mid v \in \mathcal{V}^n\right\}} \subset \mathcal{W}_x$. This establishes a connection between the solutions of a stochastic differential equation and those of an associated deterministic one.
% \end{theorem*}
% \iffalse
% \section{Acknowledgements}
% [move before the appendix?]
% \fi
%%%%%%%%%%%%%%%%%%%%%%%%%%%%%%%%%%%%%%%%%%%%%%%%%%%%%%%%%%%%%%%%%%%%%%%%%%%%%%%%%%%%%%%%%%%%%%%%%%%%%%%%%%%%%%%%%%%%%%%%%%%%%%%

\bibliographystyle{siamplain}
\bibliography{bibliobrahim}

\begin{thebibliography}{10}

\bibitem{adler2001martingale}
{\sc S.~L. Adler, D.~C. Brody, T.~A. Brun, and L.~P. Hughston}, {\em Martingale
  models for quantum state reduction}, Journal of Physics A: Mathematical and
  General, 34 (2001), p.~8795.

\bibitem{Amari}
{\sc S.-i. Amari and H.~Nagaoka}, {\em Methods of information geometry},
  vol.~191, American Mathematical Soc., 2000.

\bibitem{amini2023exponential}
{\sc N.~H. Amini, M.~Bompais, and C.~Pellegrini}, {\em Exponential selection
  and feedback stabilization of invariant subspaces of quantum trajectories},
  arXiv preprint arXiv:2310.04599,  (2023).

\bibitem{barchielli2009quantum}
{\sc A.~Barchielli and M.~Gregoratti}, {\em Quantum trajectories and
  measurements in continuous time: the diffusive case}, vol.~782, Springer,
  2009.

\bibitem{belavkin2004towards}
{\sc V.~Belavkin}, {\em Towards the theory of control in observable quantum
  systems}, arXiv preprint quant-ph/0408003,  (2004).

\bibitem{belavkin1989nondemolition}
{\sc V.~P. Belavkin}, {\em Nondemolition measurements, nonlinear filtering and
  dynamic programming of quantum stochastic processes}, in Modeling and Control
  of Systems, Springer, 1989, pp.~245--265.

\bibitem{belavkin1992quantum}
{\sc V.~P. Belavkin}, {\em Quantum stochastic calculus and quantum nonlinear
  filtering}, Journal of Multivariate analysis, 42 (1992), pp.~171--201.

\bibitem{bengtsson2017geometry}
{\sc I.~Bengtsson and K.~{\.Z}yczkowski}, {\em Geometry of quantum states: an
  introduction to quantum entanglement}, Cambridge university press, 2017.

\bibitem{benoist2014large}
{\sc T.~Benoist and C.~Pellegrini}, {\em Large time behavior and convergence
  rate for quantum filters under standard non demolition conditions},
  Communications in Mathematical Physics, 331 (2014), pp.~703--723.

\bibitem{bompais2023asymptotic}
{\sc M.~Bompais and N.~H. Amini}, {\em On asymptotic stability of
  non-demolition quantum trajectories with measurement imperfections}, IEEE
  Conference on Decision and Control,  (2023).

\bibitem{bouten2007introduction}
{\sc L.~Bouten, R.~Van~Handel, and M.~R. James}, {\em An introduction to
  quantum filtering}, SIAM Journal on Control and Optimization, 46 (2007),
  pp.~2199--2241.

\bibitem{Brigo2}
{\sc D.~Brigo, B.~Hanzon, and F.~L. Gland}, {\em {Approximate nonlinear
  filtering by projection on exponential manifolds of densities}}, Bernoulli, 5
  (1999), pp.~495 -- 534.

\bibitem{Brigo1}
{\sc D.~Brigo, B.~Hanzon, and F.~LeGland}, {\em A differential geometric
  approach to nonlinear filtering: the projection filter}, IEEE Transactions on
  Automatic Control, 43 (1998), pp.~247--252.

\bibitem{Cardona2018ExponentialSS}
{\sc G.~Cardona, A.~Sarlette, and P.~Rouchon}, {\em Exponential stochastic
  stabilization of a two-level quantum system via strict lyapunov control},
  2018 IEEE Conference on Decision and Control (CDC),  (2018), pp.~6591--6596.

\bibitem{Cardona2019ExponentialSO}
{\sc G.~Cardona, A.~Sarlette, and P.~Rouchon}, {\em Exponential stabilization
  of quantum systems under continuous non-demolition measurements}, Autom., 112
  (2019).

\bibitem{Carmichael1993AnOS}
{\sc H.~Carmichael}, {\em An open systems approach to quantum optics}, 1993.

\bibitem{davies1976quantum}
{\sc E.~B. Davies}, {\em Quantum theory of open systems}, Academic Press, 1976.

\bibitem{elworthy_1982}
{\sc K.~D. Elworthy}, {\em Stochastic Differential Equations on Manifolds},
  London Mathematical Society Lecture Note Series, Cambridge University Press,
  1982, \url{https://doi.org/10.1017/CBO9781107325609}.

\bibitem{Gao}
{\sc Q.~Gao, G.~Zhang, and I.~R. Petersen}, {\em An exponential quantum
  projection filter for open quantum systems}, Automatica, 99 (2019),
  pp.~59--68.

\bibitem{gardiner2004input}
{\sc C.~Gardiner}, {\em Input and output in damped quantum systems iii:
  Formulation of damped systems driven by fermion fields}, Optics
  communications, 243 (2004), pp.~57--80.

\bibitem{Hudson1984QuantumIF}
{\sc R.~L. Hudson and K.~Parthasarathy}, {\em Quantum ito's formula and
  stochastic evolutions}, Communications in Mathematical Physics, 93 (1984),
  pp.~301--323.

\bibitem{Kham}
{\sc R.~Khasminskii}, {\em Stochastic stability of differential equations},
  vol.~66, Springer Science \& Business Media, 2011.

\bibitem{liang2022model}
{\sc W.~Liang and N.~H. Amini}, {\em Model robustness for feedback
  stabilization of open quantum systems}, Automatica, 163 (2024), p.~111590.

\bibitem{Liang}
{\sc W.~Liang, N.~H. Amini, and P.~Mason}, {\em On exponential stabilization of
  spin-$\frac{1}{2}$ systems}, in IEEE Conference on Decision and Control
  (CDC), IEEE, 2018, pp.~6602--6607.

\bibitem{liang2019exponential}
{\sc W.~Liang, N.~H. Amini, and P.~Mason}, {\em On exponential stabilization of
  {N}-level quantum angular momentum systems}, SIAM Journal on Control and
  Optimization, 57 (2019), pp.~3939--3960.

\bibitem{liang2020robustness}
{\sc W.~Liang, N.~H. Amini, and P.~Mason}, {\em On the robustness of
  stabilizing feedbacks for quantum spin-1/2 systems}, in 2020 59th IEEE
  Conference on Decision and Control (CDC), IEEE, 2020, pp.~3842--3847.

\bibitem{liang2021robust}
{\sc W.~Liang, N.~H. Amini, and P.~Mason}, {\em Robust feedback stabilization
  of {N}-level quantum spin systems}, SIAM Journal on Control and Optimization,
  59 (2021), pp.~669--692.

\bibitem{lidar1998decoherence}
{\sc D.~A. Lidar, I.~L. Chuang, and K.~B. Whaley}, {\em Decoherence-free
  subspaces for quantum computation}, Physical Review Letters, 81 (1998),
  p.~2594.

\bibitem{Vanhandel3}
{\sc M.~Mirrahimi and R.~Van~Handel}, {\em Stabilizing feedback controls for
  quantum systems}, SIAM Journal on Control and Optimization, 46 (2007),
  pp.~445--467.

\bibitem{Nielsen}
{\sc A.~E. Nielsen, A.~S. Hopkins, and H.~Mabuchi}, {\em Quantum filter
  reduction for measurement-feedback control via unsupervised manifold
  learning}, New Journal of Physics, 11 (2009), p.~105043.

\bibitem{petz2011introduction}
{\sc D.~Petz and C.~Ghinea}, {\em Introduction to quantum fisher information},
  in Quantum probability and related topics, World Scientific, 2011,
  pp.~261--281.

\bibitem{ramadan2022exact}
{\sc I.~Ramadan, N.~H. Amini, and P.~Mason}, {\em Exact solution and projection
  filters for open quantum systems subject to imperfect measurements}, IEEE
  Control Systems Letters, 7 (2022), pp.~949--954.

\bibitem{Rouchon}
{\sc P.~Rouchon and J.~F. Ralph}, {\em Efficient quantum filtering for quantum
  feedback control}, Physical Review A, 91 (2015), p.~012118.

\bibitem{sayrin2011real}
{\sc C.~Sayrin, I.~Dotsenko, X.~Zhou, B.~Peaudecerf, T.~Rybarczyk, S.~Gleyzes,
  P.~Rouchon, M.~Mirrahimi, H.~Amini, M.~Brune, et~al.}, {\em Real-time quantum
  feedback prepares and stabilizes photon number states}, Nature, 477 (2011),
  pp.~73--77.

\bibitem{tezak2017low}
{\sc N.~Tezak, N.~H. Amini, and H.~Mabuchi}, {\em Low-dimensional manifolds for
  exact representation of open quantum systems}, Physical Review A, 96 (2017),
  p.~062113.

\bibitem{ticozzi2008quantum}
{\sc F.~Ticozzi and L.~Viola}, {\em Quantum markovian subsystems: invariance,
  attractivity, and control}, IEEE Transactions on Automatic Control, 53
  (2008), pp.~2048--2063.

\bibitem{Vanhandel1}
{\sc R.~Van~Handel and H.~Mabuchi}, {\em Quantum projection filter for a highly
  nonlinear model in cavity qed}, Journal of Optics B: Quantum and
  Semiclassical Optics, 7 (2005), p.~S226.

\bibitem{Vanhandel2}
{\sc R.~Van~Handel, J.~K. Stockton, and H.~Mabuchi}, {\em Feedback control of
  quantum state reduction}, IEEE Transactions on Automatic Control, 50 (2005),
  pp.~768--780.

\bibitem{van2005modelling}
{\sc R.~van Handel, J.~K. Stockton, and H.~Mabuchi}, {\em Modelling and
  feedback control design for quantum state preparation}, Journal of Optics B:
  Quantum and Semiclassical Optics, 7 (2005), pp.~S179--S197.

\bibitem{wiseman2009quantum}
{\sc H.~M. Wiseman and G.~J. Milburn}, {\em Quantum measurement and control},
  Cambridge university press, 2009.

\end{thebibliography}
\end{document}